\PassOptionsToPackage{unicode}{hyperref}
\PassOptionsToPackage{hyphens}{url}
\PassOptionsToPackage{dvipsnames,svgnames,x11names}{xcolor}
\documentclass[
  11pt,
  letterpaper,
]{article}
\usepackage{xcolor}
\usepackage[margin=1in]{geometry}
\usepackage{amsmath,amssymb}
\usepackage{iftex}
\ifPDFTeX
  \usepackage[T1]{fontenc}
  \usepackage[utf8]{inputenc}
  \usepackage{textcomp} % provide euro and other symbols
\else % if luatex or xetex
  \usepackage{unicode-math} % this also loads fontspec
  \defaultfontfeatures{Scale=MatchLowercase}
  \defaultfontfeatures[\rmfamily]{Ligatures=TeX,Scale=1}
\fi
\usepackage{lmodern}
\ifPDFTeX\else
\fi
\IfFileExists{upquote.sty}{\usepackage{upquote}}{}
\IfFileExists{microtype.sty}{% use microtype if available
  \usepackage[]{microtype}
  \UseMicrotypeSet[protrusion]{basicmath} % disable protrusion for tt fonts
}{}
\usepackage{setspace}
\makeatletter
\@ifundefined{KOMAClassName}{% if non-KOMA class
  \IfFileExists{parskip.sty}{%
    \usepackage{parskip}
  }{% else
    \setlength{\parindent}{0pt}
    \setlength{\parskip}{6pt plus 2pt minus 1pt}}
}{% if KOMA class
  \KOMAoptions{parskip=half}}
\makeatother
\makeatletter
\ifx\paragraph\undefined\else
  \let\oldparagraph\paragraph
  \renewcommand{\paragraph}{
    \@ifstar
      \xxxParagraphStar
      \xxxParagraphNoStar
  }
  \newcommand{\xxxParagraphStar}[1]{\oldparagraph*{#1}\mbox{}}
  \newcommand{\xxxParagraphNoStar}[1]{\oldparagraph{#1}\mbox{}}
\fi
\ifx\subparagraph\undefined\else
  \let\oldsubparagraph\subparagraph
  \renewcommand{\subparagraph}{
    \@ifstar
      \xxxSubParagraphStar
      \xxxSubParagraphNoStar
  }
  \newcommand{\xxxSubParagraphStar}[1]{\oldsubparagraph*{#1}\mbox{}}
  \newcommand{\xxxSubParagraphNoStar}[1]{\oldsubparagraph{#1}\mbox{}}
\fi
\makeatother

\usepackage{longtable,booktabs,array}
\usepackage{calc} % for calculating minipage widths
\usepackage{etoolbox}
\makeatletter
\patchcmd\longtable{\par}{\if@noskipsec\mbox{}\fi\par}{}{}
\makeatother
\IfFileExists{footnotehyper.sty}{\usepackage{footnotehyper}}{\usepackage{footnote}}
\makesavenoteenv{longtable}
\usepackage{graphicx}
\makeatletter
\newsavebox\pandoc@box
\newcommand*\pandocbounded[1]{% scales image to fit in text height/width
  \sbox\pandoc@box{#1}%
  \Gscale@div\@tempa{\textheight}{\dimexpr\ht\pandoc@box+\dp\pandoc@box\relax}%
  \Gscale@div\@tempb{\linewidth}{\wd\pandoc@box}%
  \ifdim\@tempb\p@<\@tempa\p@\let\@tempa\@tempb\fi% select the smaller of both
  \ifdim\@tempa\p@<\p@\scalebox{\@tempa}{\usebox\pandoc@box}%
  \else\usebox{\pandoc@box}%
  \fi%
}
\def\fps@figure{htbp}
\makeatother

\NewDocumentCommand\citeproctext{}{}
\NewDocumentCommand\citeproc{mm}{%
  \begingroup\def\citeproctext{#2}\cite{#1}\endgroup}
\makeatletter
 \let\@cite@ofmt\@firstofone
 \def\@biblabel#1{}
 \def\@cite#1#2{{#1\if@tempswa , #2\fi}}
\makeatother
\newlength{\cslhangindent}
\newlength{\csllabelwidth}
\newenvironment{CSLReferences}[2] % #1 hanging-indent, #2 entry-spacing
 {\begin{list}{}{%
  \setlength{\itemindent}{0pt}
  \setlength{\leftmargin}{0pt}
  \setlength{\parsep}{0pt}
  \ifodd #1
   \setlength{\leftmargin}{\cslhangindent}
   \setlength{\itemindent}{-1\cslhangindent}
  \fi
  \setlength{\itemsep}{#2\baselineskip}}}
 {\end{list}}
\usepackage{calc}

\usepackage{amsmath}
\usepackage{amssymb}
\usepackage{amsthm}
\usepackage{mathtools}
\usepackage{epigraph}
\usepackage{thmtools}
\usepackage{thm-restate}
\usepackage{caption}
\declaretheorem[name=Proposition]{prop}

\declaretheorem[name=Assumption]{assn}
\makeatletter
\@ifpackageloaded{caption}{}{\usepackage{caption}}
\AtBeginDocument{%
\ifdefined\contentsname
  \renewcommand*\contentsname{Table of contents}
\else
  \newcommand\contentsname{Table of contents}
\fi
\ifdefined\listfigurename
  \renewcommand*\listfigurename{List of Figures}
\else
  \newcommand\listfigurename{List of Figures}
\fi
\ifdefined\listtablename
  \renewcommand*\listtablename{List of Tables}
\else
  \newcommand\listtablename{List of Tables}
\fi
\ifdefined\figurename
  \renewcommand*\figurename{Figure}
\else
  \newcommand\figurename{Figure}
\fi
\ifdefined\tablename
  \renewcommand*\tablename{Table}
\else
  \newcommand\tablename{Table}
\fi
}
\@ifpackageloaded{float}{}{\usepackage{float}}
\floatstyle{ruled}
\@ifundefined{c@chapter}{\newfloat{codelisting}{h}{lop}}{\newfloat{codelisting}{h}{lop}[chapter]}
\floatname{codelisting}{Listing}

\makeatother
\makeatletter
\@ifpackageloaded{caption}{}{\usepackage{caption}}
\@ifpackageloaded{subcaption}{}{\usepackage{subcaption}}
\makeatother
\usepackage{bookmark}
\IfFileExists{xurl.sty}{\usepackage{xurl}}{} % add URL line breaks if available
\hypersetup{
  pdftitle={Prices and Competition in Vertically Integrated Launch Markets},
  pdfauthor={Akhil Rao},
  pdfkeywords={vertical integration, launch
services, capacity-constrained pricing, foreclosure, space economics},
  colorlinks=true,
  linkcolor={blue},
  filecolor={Maroon},
  citecolor={Blue},
  urlcolor={Blue},
  pdfcreator={LaTeX via pandoc}}

\title{Prices and Competition in Vertically Integrated Launch Markets}
\author{Akhil Rao\footnote{Rational Futures, Washington, D.C. Email:
  akhil@rationalfutures.com. I am grateful to Tom Colvin, Moon Kim,
  Adrian Mangiuca, Jeff Feige and participants in the HBS/ESPI Space
  Economics Seminar Series for helpful discussions and feedback. All
  errors are my own.}}
\date{}
\begin{document}
\maketitle
\begin{abstract}
Over the last 15 years the number of U.S. orbital launches has grown by
roughly an order of magnitude. About three-quarters of those launches
were on SpaceX's Falcon 9 vehicle, and roughly three-fifths of those
Falcon launches deployed SpaceX's own Starlink constellation. A
back-of-envelope Wright's law calculation suggests this increase in
experience should have driven the Falcon 9's real launch cost down by
roughly 70\% over 2012--2026. Yet over the same period the advertised
price fell by less than 6\% in real terms. Why? I develop a simple model
of competition and vertical integration between launchers and
constellations. The launch market is Bertrand; the constellation
services market is Cournot; one launcher is integrated with its captive
constellation. Three results follow. First, the removal of double
marginalization raises the captive constellation's equilibrium size. If
the integrated launcher obtains cost reductions from this experience,
they are captured as capacity rent rather than passed through to
external buyers. Second, the integrated launcher prices launches to be
indifferent between serving internal and external demand, leaving more
residual demand for a competing launcher to monopolize and pushing the
equilibrium launch price up. Third, the same capacity rent that holds
the equilibrium launch price up can attract entry to the launch segment,
while the expansion of the captive constellation deters entry on the
constellation side.

\vspace{0.8em}\par\noindent\textbf{Keywords:} vertical integration, launch services, capacity-constrained pricing, foreclosure, space economics.\par\noindent\textbf{JEL classification:} L11, L22, L42, O33.
\end{abstract}

\setstretch{1.15}
\newpage

\setlength{\epigraphwidth}{0.4\textwidth}
\epigraph{...some industries, because they are integrated through to the consumer ... have not been faced with countervailing power.}{--- John Kenneth Galbraith, \\ \emph{American Capitalism: The Concept of Countervailing Power}}

\section{Introduction}\label{introduction}

Over the last 15 years the United States has gone from a handful of
orbital launches a year to well over a hundred. The majority of them
have been on a single vehicle, SpaceX's Falcon 9. The cost of aerospace
hardware often falls with cumulative production experience, the
learning-by-doing effect formalized as Wright's law
(\citeproc{ref-wright1936}{Wright 1936};
\citeproc{ref-nasaceh2004}{National Aeronautics and Space Administration
2004}; \citeproc{ref-terzinicoli2024}{Terzi and Nicoli 2024}). Applied
to the Falcon 9's cumulative flights at NASA's default eighty-five
percent learning rate, a Wright's law calculation predicts that the real
cost of a Falcon 9 launch should have fallen by roughly seventy percent
between 2012 and 2026. Under constant proportional markups, the price
should have fallen similarly. In fact, the advertised price fell by less
than six percent in real terms over the same window. Some customers even
saw real price increases across a broader set of launch vehicles over
this period (\citeproc{ref-kim2025counting}{Kim 2025}). Meanwhile,
roughly three-fifths of the Falcon 9's flights carry SpaceX's own
Starlink constellation, which is an order of magnitude larger than its
nearest U.S. competitor. The next-largest constellations own no launch
vehicles and buy their launches on the market, including from SpaceX.
Figure~\ref{fig-stylized-facts} illustrates these stylized facts.

\begin{figure}

\centering{

\includegraphics[width=1\linewidth,height=\textheight,keepaspectratio]{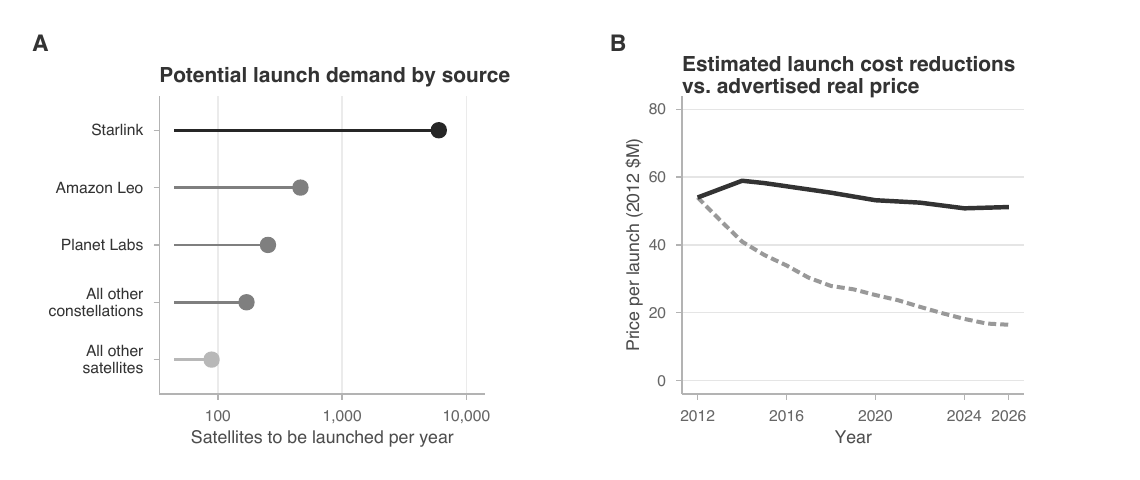}

}

\caption{\label{fig-stylized-facts}\textbf{Stylized facts about the U.S.
launch market.} (A) Potential U.S. launch demand by source. Of the named
constellations, only Starlink is vertically integrated. Estimates show
only systems that have received FCC Part 25 approval. (B) Falcon 9's
estimated launch price if Wright's law cost reductions were passed
through to maintain a constant margin (gray dashed line) and
inflation-adjusted advertised Falcon 9 prices (black solid line).
\emph{Sources:} (A) (\citeproc{ref-colvinkimrao2026}{Colvin, Kim, and
Rao 2026}), Scenario A annualized potential demand; (B) Falcon 9
advertised-price series, NNSI-deflated to 2012 dollars, against an
85\%-slope Wright's law prediction, both anchored at \$54M in 2012. The
source of each advertised-price observation is documented in the
appendix.}

\end{figure}%

I show that these facts are tightly connected. Launch is a common input
to every space market, so its market structure shapes competition
throughout the space economy. For a firm that launches its own
satellites, a launch sold to an outside buyer is a launch not flown for
its own constellation. Its reservation price for an external launch is
then the downstream profit it forgoes; a cost reduction widens that
margin. Vertical integration between a launcher and a constellation may
thus not pass cost reductions through, instead converting them into a
rent on scarce launch capacity.

I formalize this in a model of competition between two launchers and two
constellations, one launcher integrated with a captive constellation.
The model is static and highly stylized to illustrate the key economic
forces. Launchers set prices subject to a capacity constraint, and
constellations compete in quantities of satellites to provide satellite
services. Firms compete over external government and commercial payload
demand, which is small relative to the integrated firm's own launch
demand but large enough to sustain more than one launcher in equilibrium
(consistent with U.S. launch demand estimates in Colvin, Kim, and Rao
(\citeproc{ref-colvinkimrao2026}{2026})). Launch capacities, costs,
non-constellation demand, and the integration structure are primitives.
I use the model to compare the Nash equilibria of an integrated and a
non-integrated industry.

The core mechanism is an opportunity cost: the value of a marginal
launch to the integrated firm is the downstream profit it could earn
from launching satellites to its own constellation. This shadow price
sets a floor below the external price, and creates rents for the
remaining launcher. Three results follow. First, even when the two
launchers have identical costs, vertical integration enables the captive
constellation to grow larger than the independent constellation. The
integrated launcher may thus accumulate more experience than the
non-integrated launcher, and if learning-by-doing applies, obtain lower
marginal costs. This cost advantage is captured as a capacity rent
rather than passed through. Second, because the integrated launcher
internalizes the capacity rent and reduces its launch supply to the
external market, the non-integrated launcher is able to charge higher
prices. The equilibrium price thus reflects the capacity rent even when
the marginal launcher for external demand is not integrated. Third, this
rent affects entry asymmetrically: it can attract entry into the launch
segment while deterring entry into the constellation segment.

The paper sits at the intersection of three literatures. The first is
vertical foreclosure: when the integrated launcher withholds scarce
launch capacity from the external market, the price independent
constellations pay rises, echoing results in Salinger
(\citeproc{ref-salinger1988}{1988}), Ordover, Saloner, and Salop
(\citeproc{ref-ordoversalonersalop1990}{1990}), Hart and Tirole
(\citeproc{ref-harttirole1990}{1990}), and Rey and Tirole
(\citeproc{ref-reytirole2007primer}{2007}). Unlike that literature, cost
reductions in this model are retained as rent on scarce capacity. The
second is pricing of space launch services. Terzi and Nicoli
(\citeproc{ref-terzinicoli2024}{2024}) estimate how Wright's law may
materialize in the launch industry, and Su, Yang, and Sweeting
(\citeproc{ref-suyangsweeting2026}{2026}) show how mergers like the
United Launch Alliance joint venture may increase buyer (government) and
total social surplus in the presence of learning-by-doing. This paper
assumes learning effects as an exogenous static cost differential. Rao
and Colvin (\citeproc{ref-raocolvin2025}{2025}) assess how SpaceX's
pricing of Starship may depend on Starlink's profitability when launches
are scarce. This model recovers that argument as a property of the joint
first-order condition, with a modification for the slope of residual
external demand. The third is the economics of innovation and entry
(e.g., Aghion and Howitt (\citeproc{ref-aghionhowitt1992}{1992})). The
result in the constellation segment echoes the efficiency effect of
Gilbert and Newbery (\citeproc{ref-gilbertnewbery1982}{1982}), under
which a dominant incumbent preempts and maintains market power, while
the result in the launch segment is reminiscent of the replacement
effect of Arrow (\citeproc{ref-arrow1962}{1962}) and Reinganum
(\citeproc{ref-reinganum1983}{1983}), under which a capable entrant can
still displace an incumbent. Doyle (\citeproc{ref-doyle2026}{2026})
studies a related model in which an oligopoly of constellations
endogenously reduces innovation without a launch market; similarly,
Guyot, Rao, and Rouillon (\citeproc{ref-guyotetal2023}{2023}) study a
model in which a constellation duopoly reduces economic welfare below
the social optimum. Neither paper includes the launch segment, which is
a key aspect of the model developed here.

The rest of the paper proceeds as follows. Section 2 sets out the model
and derives how vertical integration reshapes the equilibrium
constellation sizes and the external launch price. Section 3 turns to
entry, showing that the same capacity rent attracts entry in the launch
segment while deterring it in the constellation segment. Section 4
discusses what the results may imply for competition and creative
destruction in launch, and concludes. All proofs are in the appendix.

\section{Vertical integration and
constellations}\label{vertical-integration-and-constellations}

In this section I show that vertical integration between a launcher and
a constellation expands the size of the captive constellation compared
to its non-integrated rival. Learning-by-doing may thus generate
differential launch cost reductions for the integrated firm, which the
integrated firm captures as rents. I consider two market configurations:

\begin{itemize}
\item Configuration $NI$, in which there is no vertical integration. Both launchers compete to serve two independent constellations and external demand, and both constellations purchase launches at the market price.
\item Configuration $VI$, in which one launcher and constellation are integrated and their choice variables are jointly optimized, while the other constellation purchases launches at the market price.
\end{itemize}

The configuration assumed is denoted by superscripts over variables,
e.g., \(N_i^{NI}\) is the size of constellation \(i\) under
configuration \(NI\).

\subsection{The setting}\label{the-setting}

There are two launchers, \(S\) and \(B\), with marginal launch costs
\(l_B \geq l_S > 0\). Under learning-by-doing, \(l_i\) is decreasing in
the cumulative number of launches \(i\) conducts. For simplicity, I
ignore the time dimension throughout the analysis and consider cases
where \(l_B = l_S\) and \(l_B > l_S\).

Launcher \(B\) has unlimited capacity, while launcher \(S\) has capacity
\(k_S > 0\). Launch capacity here may be interpreted as rocket
production capacity as in Triezenberg et al.
(\citeproc{ref-triezenberg2024}{2024}), or as launch permits as in
Colvin, Kim, and Rao (\citeproc{ref-colvinkimrao2026}{2026}). What
matters is not the exact nature of the constraint but two facts: no firm
can launch infinitely many flights, and no single firm has captured the
whole launch market (including launch demand of other
constellations).\footnote{If the dominant firm had unlimited capacity
  and lower costs, the standard Bertrand outcome degenerates to the
  dominant firm capturing the entire launch market. Given that this has
  not been observed over the past 15 years, I make assumptions to focus
  on the empirically relevant situation. While it is more realistic for
  both launchers to have limited capacity, Assumption
  \ref{assn:demand-small} ensures launcher \(B\) launches no more than
  launcher \(S\) in equilibrium, so \(B\)'s capacity does not bind. This
  roughly matches reality: SpaceX appears to have more launch capacity
  than its next nearest launch competitors, but its competitors have
  still captured a positive number of launches every year. The central
  economic force, the capacity rent that induces \(S\) to prioritize its
  launch supply for itself, does not depend on \(B\)'s capacity being
  unlimited.} There are two constellations, \(S\) and \(K\), with
\(N_S\) and \(N_K\) satellites respectively. Constellation \(S\) is
captive to launcher \(S\) under vertical integration, while
constellation \(K\) is always independent. Satellites in each
constellation cost \(c\) to build and operate and require a single
launch.

In addition to the two constellations, there is external demand for
launches \begin{equation}
X(p) = a - bp,
\end{equation}

where \(a, b > 0\). The external demand represents all other space
systems, e.g., government missions and constellations in other services
markets. Launcher \(S\) sells \(m_S\) launches to external customers
including the other constellation, so its total launches are
\(k_S = N_S + m_S\) and launcher \(B\)'s are \(Q_B = N_K + X(p) - m_S\).
Launchers compete for launch demand by setting prices, resulting in
equilibrium launch price \(p_{eqm}\).

I place two assumptions on demand primitives: first that external
non-constellation demand is small relative to the integrated launcher's
capacity, and second that launcher \(S\) does not claim the entire
launch market. The first is consistent with Colvin, Kim, and Rao
(\citeproc{ref-colvinkimrao2026}{2026}), whose estimates place U.S.
government and commercial launch demand from non-captive constellations
below available launch capacity, with capacity pressure concentrated in
the integrated constellation operators' own launch demand. The second is
consistent with the observed fact that over the past 15 years, no single
U.S. launcher has captured all U.S. launch demand.

\begin{assn}[External demand is small relative to capacity]
\label{assn:demand-small}
Background launch demand is small relative to launcher $S$'s capacity, $a \leq \bar a$ with $\bar a = \tfrac{6 k_S(1 + 3 b\gamma) + b\,(4c + 5 l_B + 6 b\gamma\, l_B - 4\theta)}{1 + 6 b\gamma}$, so that in equilibrium launcher $B$ launches no more than launcher $S$.
\end{assn}

\begin{assn}[Launcher $S$ does not take the whole market]
\label{assn:capacity-binds}
Launcher $S$'s capacity is large enough that it sells some launches externally ($m_S >0$), but small enough that its capacity constraint binds without capturing the whole market ($m_S < N_K + X(p)$). Formally, $\underline{k} < k_S < \bar{k}$, with $\bar{k} = \tfrac{a(1+6b\gamma) - b\,(4c + 5l_B + 6 b\gamma\, l_B - 4\theta)}{6 b\gamma}$ and $\underline{k} = \tfrac{2 b\theta - a - b\,(2c + l_B)}{6 b\gamma}$.
\end{assn}

Intuitively, Assumptions \ref{assn:demand-small} and
\ref{assn:capacity-binds} ensure that \(S\)'s launch capacity is large
enough to sell externally but small enough to bind. Operating
capacity-constrained, rather than building out enough capacity to serve
the market with slack, is a standard outcome from an optimal capacity
choice problem when capacity is costly.\footnote{In the peak-load
  capacity pricing literature, price exceeds marginal operating cost
  when capacity binds (\citeproc{ref-williamson1966}{Williamson 1966}).
  Optimal capacity is set by equating the marginal cost of capacity to
  the expected marginal cost of unserved demand, leaving demand above
  capacity with positive probability (\citeproc{ref-chao1983}{Chao
  1983}). The appendix shows a version of this result applies here: when
  the integrated firm builds its launch capacity at a convex cost before
  the market game, the marginal cost of launch capacity is equated with
  the capacity rent and launch capacity binds in equilibrium.}

Constellations compete to provide services to end users by choosing the
number of satellites they deploy.\footnote{This type of Cournot
  structure can obtain from a game in which service quality is
  increasing in the number of satellites, as is the case for LEO
  broadband services. For example, Guyot, Rao, and Rouillon
  (\citeproc{ref-guyotetal2023}{2023}) and Doyle
  (\citeproc{ref-doyle2026}{2026}) study games between
  telecommunications constellation operators with this feature.} They
are faced with linear inverse demand \begin{equation}
P_s(N_S + N_K) = \theta - \gamma (N_S + N_K),
\end{equation} where \(\theta, \gamma >0\). Constellation revenues are
given by \(R(N_i) = P_s(N_S + N_K) N_i\). When a constellation buys
launches at the market price \(p_{eqm}\), its effective per-satellite
cost is \(c + p_{eqm}\); when constellation \(S\) is vertically
integrated with launcher \(S\), the joint firm's effective per-satellite
cost is \(c + l_S\). Each constellation chooses its size \(N_i\) to
maximize gross services revenues net of the costs of operating and
launching its satellites.

\textbf{Equilibrium concept.} Launchers post launch prices,
constellations choose sizes and the launch market clears under efficient
(surplus-maximizing) rationing.\footnote{Most of what follows does not
  turn on the specifics of the rationing rule. The captive
  constellation's size advantage (Proposition
  \ref{prop:captive-expansion}), the opportunity-cost lemma (Lemma
  \ref{lemm:opportunity-cost}), and constellation-entry deterrence
  (Proposition \ref{prop:vertical-integration-constellation-entry}) need
  only that launcher \(B\) faces a downward-sloping residual demand that
  the integrated firm's withdrawal of external supply shifts outward.
  The signs and magnitudes of the price wedge in Proposition
  \ref{prop:vertical-integration-launch-price} and the attraction of a
  launch entrant in Proposition
  \ref{prop:vertical-integration-launch-entry} depend on the relative
  degree to which rationing is efficient vs.~proportional.} The
integrated firm chooses its captive size \(N_S\) and external launch
supply \(m_S\) jointly to maximize combined profit. An equilibrium is a
launch price (\(p_{eqm}\)) and a profile of launcher supplies and
constellation sizes (\(m_S, Q_B, N_S, N_K\)) at which no firm can
profitably deviate holding the others' choices fixed.

Both launchers and the independent constellation internalize their
effects on the equilibrium launch price. \(S\) internalizes this through
its external supply \(m_S\), \(B\) internalizes this through the
dependence of its residual-monopoly price on the external supply \(S\)
offers, and the independent constellation \(K\) internalizes this
through its launch demand \(N_K\).\footnote{Throughout I take the
  independent constellation \(K\) to internalize its effect on the
  launch price. Whether \(K\) internalizes its price impact changes the
  magnitude of the gap, not its sign; the expression is algebraically
  cleaner when \(K\) internalizes its price effect. In reality, a large
  non-integrated operator like Amazon Leo may purchase launches in large
  enough tranches to materially affect market supply
  (\citeproc{ref-foust2022kuiper}{Foust 2022}).}

\subsection{Equilibrium constellation sizes and vertical
integration}\label{equilibrium-constellation-sizes-and-vertical-integration}

The following proposition establishes that, given Assumptions
\ref{assn:demand-small} and \ref{assn:capacity-binds}, the captive
constellation \(S\) is strictly larger than the independent
constellation \(K\) even when launch costs are identical between \(S\)
and \(B\).

\begin{restatable}[Captive expansion under vertical integration]{prop}{propConstellationSize}
\label{prop:captive-expansion}
Given Assumptions \ref{assn:demand-small} and \ref{assn:capacity-binds}, the captive constellation is strictly larger than the independent constellation, with gap
\begin{equation}
N^{VI}_S - N^{VI}_K = \dfrac{k_S}{1 + 2 b \gamma} > 0,
\end{equation}
even at symmetric launch costs $l_S = l_B$.
\end{restatable}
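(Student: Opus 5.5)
The plan is to characterize the $VI$ equilibrium as the solution of a linear system of first-order conditions, and then take the difference of the two constellations' conditions. Assumption \ref{assn:capacity-binds} lets me impose $N_S+m_S=k_S$ with $0<m_S<N_K+X(p)$, so the integrated firm's only free choice is the split between captive satellites and external sales. Assumption \ref{assn:demand-small} guarantees that $B$'s unlimited capacity is irrelevant, so $B$ behaves as a monopolist on its residual demand.

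\textbf{Step 1: launcher $B$'s price.} Under efficient rationing, $B$ faces residual demand $D_B(p)=N_K+a-bp-m_S$. Maximizing $(p-l_B)D_B(p)$ gives
\begin{equation*}
p=\frac{N_K+a-m_S+b\,l_B}{2b},
\end{equation*}
so that $\partial p/\partial N_K=1/(2b)$ and $\partial p/\partial m_S=-1/(2b)$. These are the price effects that $K$ and $S$ internalize.

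\textbf{Step 2: the two constellations' conditions.} For the integrated firm, substitute $N_S=k_S-m_S$ into joint profit
\begin{equation*}
(\theta-\gamma(N_S+N_K)-c)N_S+p\,m_S-l_S k_S,
\end{equation*}
and differentiate in $m_S$. The first-order condition is
\begin{equation*}
-(\theta-2\gamma N_S-\gamma N_K-c)+p-\frac{m_S}{2b}=0.
\end{equation*}
This condition also delivers the opportunity-cost characterization: the marginal captive satellite is worth the marginal external sale. For the independent constellation, the first-order condition is
\begin{equation*}
\theta-2\gamma N_K-\gamma N_S-c-p-\frac{N_K}{2b}=0.
\end{equation*}
The cost $l_S$ enters only as a constant in the profit, which is why the result holds at $l_S=l_B$.

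\textbf{Step 3: add the conditions and close the system.} Adding the two conditions cancels $\theta$, $c$ and $p$, leaving
\begin{equation*}
\gamma(N_S-N_K)=\frac{m_S+N_K}{2b}.
\end{equation*}
Substituting $m_S=k_S-N_S$ gives
\begin{equation*}
2b\gamma(N_S-N_K)=k_S-(N_S-N_K),
\end{equation*}
and hence
\begin{equation*}
N_S-N_K=\frac{k_S}{1+2b\gamma}.
\end{equation*}
This is strictly positive since $k_S>0$ and $b,\gamma>0$. The key feature is that $\theta$, $c$, $p$ and both launch costs drop out, which is exactly why the gap is independent of cost asymmetry.

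\textbf{Step 4: check the interiority conditions.} I would solve the full linear system for $(N_S,N_K,m_S,p)$ and verify that $\bar a$ and $\underline{k},\bar{k}$ are precisely the thresholds for $Q_B\le k_S$, $m_S>0$ and $m_S<N_K+X(p)$. This confirms that the first-order conditions used above are valid.

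\textbf{Main obstacle.} The delicate step is Step 1: pinning down exactly which price-response slopes the model intends $K$ and $S$ to internalize, namely $B$'s residual-monopoly price under efficient rationing. If my reading of the rationing rule or the timing is off, the coefficient $2b$ changes and the gap formula shifts. I would therefore first check that these slopes reproduce the stated thresholds in Assumptions \ref{assn:demand-small} and \ref{assn:capacity-binds}, for example the terms $4c+5l_B$ and $6b\gamma$ in $\bar a$ and $\bar k$. Only then would I rely on the cancellation in Step 3.
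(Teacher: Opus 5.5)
Your proof is correct and is essentially the paper's argument. The paper keeps the multiplier $\lambda_S$ and eliminates it between the $N_S$ and $m_S$ first-order conditions, while you substitute $N_S=k_S-m_S$ directly. Both then combine the resulting allocation condition with $K$'s first-order condition to reach $(1+2b\gamma)(N_S-N_K)=k_S$; you add where the paper subtracts, which is only a sign convention.

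Your Step 4 check does go through. Solving the system gives $m_S = 2b\gamma(k_S-\underline{k})/(1+2b\gamma)$, and the resulting $Q_B$ reproduces $\bar k$ and $\bar a$ exactly. Those three conditions do not, however, validate $K$'s interior condition. One finds $N_K^{VI} = 2b\gamma\,\underline{k}/(1+2b\gamma)$, which is positive only if $\underline{k}>0$, i.e.\ $a<b(2\theta-2c-l_B)$. Assumptions \ref{assn:demand-small} and \ref{assn:capacity-binds} do not guarantee this: the figure parameters with $a=14$ satisfy both assumptions yet give $N_K^{VI}<0$. The paper simply asserts $N_K>0$, so this gap is shared with the paper rather than introduced by you.
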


The key condition of the proof derives from the joint first-order
condition \(S\) optimizes, taking \(B\)'s behavior into account:
\begin{equation}
\label{eqn:opp-cost-engine}
\underbrace{R'(N_S) - c}_{\text{captive margin}} = \underbrace{p_{eqm} - \tfrac{m_S}{2b}}_{\text{external margin}}.
\end{equation}

Intuitively, \(S\) pays \(l_S\) on all \(k_S\) launches whatever the
internal/external split, so \(l_S k_S\) is a fixed cost that drops out
of the marginal allocation. As a result, the optimal split balances the
margin earned on captive satellites against the margin earned from
external sales. Like the representative agent in Samuelson
(\citeproc{ref-samuelson1956}{1956}), the integrated firm jointly
optimizes its capacity allocation across multiple uses. Marginal revenue
is then equated across uses of the scarce launch resource, both within
the constellation segment where \(S\) and \(K\) compete and across other
industries captured in \(X(p)\). Rao and Colvin
(\citeproc{ref-raocolvin2025}{2025}) make this argument under
price-taking behavior. Here \(S\) internalizes its impact on the market
price, so the external margin it perceives is the equilibrium price
\(p_{eqm}\) minus \(m_S/2b\). The latter term reflects that \(S\)
internalizes that selling one more launch externally lowers the price it
earns on all \(m_S\) of its external launches. Because it buys its
launches on the market, the independent constellation pays launcher
\(B\)'s markup on every satellite, whereas the integrated firm supplies
its own captive constellation at its internal cost and escapes the
markup (i.e., removes double marginalization). The captive therefore
faces a lower effective per-satellite cost and optimally deploys more,
even when the two launchers' costs are identical.
Figure~\ref{fig-vi-mechanism} illustrates the condition (panel A) and
the effect of vertical integration on the equilibrium constellation
sizes (panel B).

\begin{figure}

\centering{

\includegraphics[width=1\linewidth,height=\textheight,keepaspectratio]{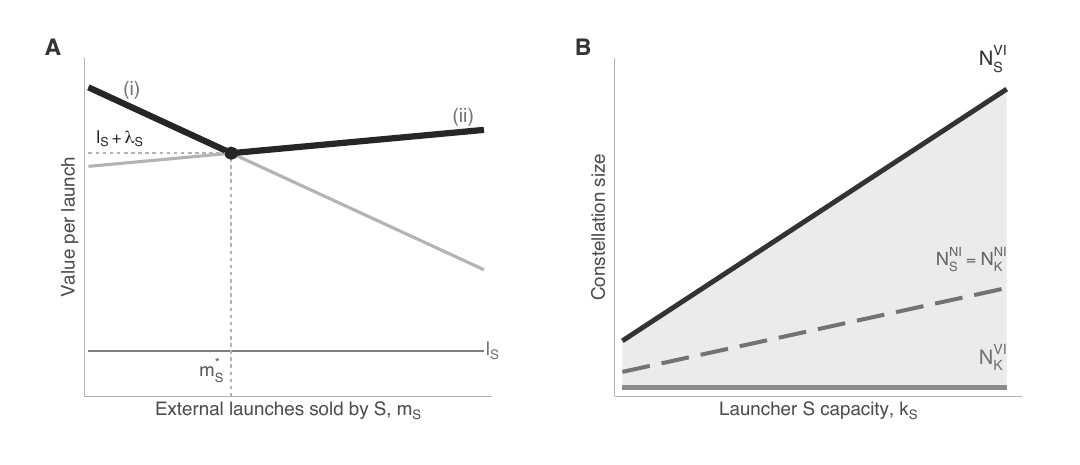}

}

\caption{\label{fig-vi-mechanism}\textbf{The integrated firm's capacity
allocation choice and the captive's expansion.} (A) The value of a
launch to the integrated firm is the upper envelope of its two uses as a
function of external sales \(m_S\): (i) shows the value of an external
sale, which is falling in \(m_S\), and (ii) shows the value of using
launch capacity for the captive constellation, which is decreasing in
\(N_S\) (so increasing in \(m_S\)). The firm sells externally up to
\(m_S^{*}\), where the two are equal and define the reservation price
\(l_S+\lambda_S\). (B) Captive expansion. Equilibrium constellation
sizes against \(S\)'s launch capacity \(k_S\). The captive \(N_S^{VI}\)
grows with capacity while the independent \(N_K^{VI}\) stays flat. The
shaded gap equals \(k_S/(1+2b\gamma)\), which is strictly positive and
independent of \(l_S\). The dashed line is the symmetric no-integration
benchmark where \(N_S^{NI}=N_K^{NI}\). \emph{Illustrative parameters:}
\(\theta=8\), \(a=11\), \(\gamma=0.1\), \(c=0.5\), \(l_S=1\), \(l_B=2\),
\(b=1\), \(k_S=12\).}

\end{figure}%

The form of Proposition \ref{prop:captive-expansion} shows that, when
launcher \(S\) also sells externally, an increase in its launch capacity
is divided between the captive constellation and external launch sales,
with the captive taking the share given in Corollary
\ref{coro:captive-constellation-captures-capacity}.

\begin{restatable}[The captive's share of capacity growth]{coro}{coroAllocation}
\label{coro:captive-constellation-captures-capacity}
Given Proposition \ref{prop:captive-expansion}, whenever launcher $S$ also sells externally ($m_S > 0$), the captive constellation captures the share
\begin{equation}
\frac{\partial N_S}{\partial k_S} = \frac{1}{1 + 2 b \gamma}
\end{equation}
of any increase in launcher $S$'s launch capacity, the remaining $2 b \gamma/(1 + 2 b \gamma)$ accruing to external launch sales.
\end{restatable}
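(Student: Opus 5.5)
The corollary follows from Proposition \ref{prop:captive-expansion} together with the capacity identity $k_S = N_S + m_S$. The plan is to first obtain $N_K^{VI}$ as a function of the primitives and show that it does not depend on $k_S$ in the interior regime $m_S>0$ of Assumption \ref{assn:capacity-binds}. The capacity derivative then comes directly from the gap formula.

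\textbf{Step 1: the equilibrium conditions.} Write down the three conditions that pin down the equilibrium. The first is the integrated firm's joint first-order condition (\ref{eqn:opp-cost-engine}),
\[
\theta - \gamma(2N_S + N_K) - c = p_{eqm} - \tfrac{m_S}{2b}.
\]
The second is $K$'s Cournot condition with internalized launch price,
\[
\theta - \gamma(N_S + 2N_K) - c - p_{eqm} - N_K\,\partial p/\partial N_K = 0.
\]
The third is $B$'s residual-monopoly price as a function of $N_K$ and $m_S$. Under efficient rationing, $B$ faces residual demand $N_K + a - bp - m_S$, so its monopoly price is linear in $m_S$ and $N_K$.

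\textbf{Step 2: $N_K$ does not move with $k_S$.} Substitute $m_S = k_S - N_S$ and the gap $N_S = N_K + k_S/(1+2b\gamma)$, which Proposition \ref{prop:captive-expansion} gives as an identity in $k_S$. What remains is a linear system in $N_K$ and $p_{eqm}$. The key claim is that the $k_S$ terms cancel in $K$'s condition. The reason is that an increase in $k_S$ raises $N_S$, which depresses the services price, and it also raises $m_S$, which lowers $p_{eqm}$. These two effects exactly offset in $K$'s margin. This is what Figure~\ref{fig-vi-mechanism}(B) displays as the flat $N_K^{VI}$ line.

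\textbf{Step 3: the capacity shares.} With $\partial N_K/\partial k_S = 0$, differentiating the gap formula gives $\partial N_S/\partial k_S = 1/(1+2b\gamma)$. The identity $\partial m_S/\partial k_S = 1 - \partial N_S/\partial k_S$ then gives the external share $2b\gamma/(1+2b\gamma)$.

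\textbf{Main obstacle.} The hard step is the cancellation in Step 2. I would verify it by explicitly solving the linear system from the appendix derivation of Proposition \ref{prop:captive-expansion} and checking that the coefficient on $k_S$ in $N_K^{VI}$ is zero.

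If the cancellation fails, $N_K$ moves with $k_S$. In that case I would instead differentiate the full system and compute $\partial N_S/\partial k_S$ directly. The stated share would then only be the share relative to $N_K$, and I would reinterpret the corollary that way.

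Either route requires the interior regime of Assumptions \ref{assn:demand-small}--\ref{assn:capacity-binds}, so that $m_S>0$ and the constraint binds. This keeps the first-order conditions holding with equality under marginal changes in $k_S$.
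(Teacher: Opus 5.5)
Your proposal is correct and follows the paper's own (implicit) route: read the share off the gap $N_S^{VI}-N_K^{VI}=k_S/(1+2b\gamma)$ in Proposition~\ref{prop:captive-expansion}, using that $N_K^{VI}$ is flat in $k_S$ (Figure~\ref{fig-vi-mechanism}(B)), then use $m_S=k_S-N_S$ for the external share. The cancellation you flag goes through exactly as you describe: substituting into $K$'s condition gives $3(1+2b\gamma)N_K^{VI}=2b(\theta-c)-a-b\,l_B$, which is free of $k_S$, so your fallback reinterpretation is never needed.
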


The denominator \(1+2b\gamma\) can be interpreted in terms of the
sources of value \(S\) must balance. A unit of launch capacity is worth
deploying internally up to the point where a marginal captive satellite
is worth no more than a launch sold outside. What governs the division
is how fast each use loses value as it takes up more launches. Adding a
satellite to the captive constellation's allocation lowers the price the
whole constellation earns at rate \(2 \gamma\), while selling a launch
externally lowers the value of an external sale at rate \(1/b\). The
ratio of the two erosion rates is \(2b\gamma\). When the services market
is deep and outside launch demand is unresponsive (\(\gamma\) and \(b\)
are both small), internal use of launch capacity barely cannibalizes
itself and the captive constellation absorbs almost the whole capacity
increase (\(1+2b\gamma \approx 1\)). When either market is thin
(\(\gamma\) or \(b\) are large), the marginal launch may be better sold
than flown.

\subsection{Equilibrium launch price and vertical
integration}\label{equilibrium-launch-price-and-vertical-integration}

Having assessed how the constellation sizes respond to vertical
integration, next I turn to the response of the launch price. As
described in Rao and Colvin (\citeproc{ref-raocolvin2025}{2025}),
launcher \(S\)'s opportunity cost of an external launch is the internal
margin it forgoes by selling a launch rather than flying its own
satellite. This defines the floor price below which it will not sell.
Under Assumption \ref{assn:capacity-binds} launcher \(S\) serves
(captive and other) launch demand up to \(k_S\), and launcher \(B\) sets
the price on the residual demand \(S\) cannot serve. Launcher \(B\)'s
market power then raises the equilibrium launch price above \(l_B\),
launcher \(B\)'s marginal cost. Lemma \ref{lemm:opportunity-cost} and
Proposition \ref{prop:vertical-integration-launch-price} make these
intuitions precise.

\begin{restatable}[Opportunity cost of an external launch]{lemm}{lemmOppCost}
\label{lemm:opportunity-cost}
Given Assumptions \ref{assn:demand-small} and \ref{assn:capacity-binds}, the shadow price $\lambda_S$ of launcher $S$'s capacity constraint equals its reservation price for an external launch, net of the launch cost:
\begin{equation}
\lambda_S = (p_{eqm} - l_S) - \frac{m_S}{2b}.
\end{equation}
\end{restatable}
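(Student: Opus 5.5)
We set up the integrated firm's problem as a constrained optimization and read off the multiplier. Under configuration $VI$, the joint firm chooses $(N_S, m_S)$ to maximize
\begin{equation*}
\Pi_S = \bigl(P_s(N_S+N_K) - c\bigr)N_S + p(m_S)\,m_S - l_S\,(N_S + m_S)
\end{equation*}
subject to $N_S + m_S \le k_S$, with multiplier $\lambda_S \ge 0$. Here $p(m_S)$ is the equilibrium launch price as a function of $S$'s external supply, holding $N_K$ fixed. The first thing to pin down is the slope $\partial p/\partial m_S$ that $S$ perceives. Launcher $B$ acts as a monopolist on the residual demand $N_K + X(p) - m_S$, which under efficient rationing is linear in $p$ with slope $-b$ in the external component. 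We will take $B$'s residual-monopoly pricing rule as given. Maximizing $(p-l_B)(N_K + a - bp - m_S)$ gives
\begin{equation*}
p = \frac{N_K + a - m_S + b\,l_B}{2b},
\end{equation*}
so $\partial p/\partial m_S = -1/(2b)$. This is exactly the source of the $m_S/2b$ term, and it is the same pass-through used to obtain \eqref{eqn:opp-cost-engine} in the proof of Proposition~\ref{prop:captive-expansion}.

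Next we write the Kuhn--Tucker conditions. The condition for $m_S$ is
\begin{equation*}
p_{eqm} + m_S\,\frac{\partial p}{\partial m_S} - l_S - \lambda_S = 0.
\end{equation*}
This first-order condition holds with equality because Assumption~\ref{assn:capacity-binds} guarantees an interior $m_S > 0$. The condition for $N_S$ is
\begin{equation*}
R'(N_S) - c - l_S - \lambda_S = 0.
\end{equation*}
Substituting $\partial p/\partial m_S = -1/(2b)$ into the $m_S$ condition gives
\begin{equation*}
\lambda_S = (p_{eqm} - l_S) - \frac{m_S}{2b},
\end{equation*}
which is the claim. Combining the two conditions, the $l_S + \lambda_S$ terms cancel and we recover \eqref{eqn:opp-cost-engine}. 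This confirms consistency with Proposition~\ref{prop:captive-expansion} and gives the interpretation of $l_S + \lambda_S$ as the reservation price: below it, diverting a launch to the captive constellation is more valuable than selling it.

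The main obstacle is justifying that the constraint actually binds, so that $\lambda_S$ is a genuine positive shadow price rather than zero, and that $B$'s capacity never binds, so that the residual-demand slope really is $-1/(2b)$. For this we would use the closed-form equilibrium from the proof of Proposition~\ref{prop:captive-expansion} and check two inequalities. First, we verify $\lambda_S > 0$: the expression $p_{eqm} - l_S - m_S/(2b)$ should be positive, and this should reduce to $k_S < \bar k$. Second, $m_S > 0$ should correspond to $k_S > \underline k$. Assumption~\ref{assn:demand-small} ensures $B$ remains unconstrained. If the algebra for the sign of $\lambda_S$ is messy, we would first fall back on complementary slackness: since $m_S < N_K + X(p)$, relaxing capacity would strictly raise profit, which forces $\lambda_S > 0$.
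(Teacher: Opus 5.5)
Your derivation of the identity is correct and is the paper's own argument. $B$'s residual-monopoly price $p_{eqm}=(a+N_K-m_S+bl_B)/2b$ gives $\partial p_{eqm}/\partial m_S=-1/2b$. The Kuhn--Tucker condition for $m_S$ at an interior $m_S>0$ is then literally $\lambda_S=(p_{eqm}-l_S)-m_S/2b$, and equating it with the $N_S$ condition recovers the allocation condition, exactly as the paper does.

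You should, however, delete your final paragraph. It is not needed, and its specific claims are false.

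\emph{Why it is not needed.} The identity uses only stationarity in $m_S$, so it holds whether or not the constraint binds. If the constraint is slack, both sides are zero.

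\emph{Why $\lambda_S>0$ does not reduce to $k_S<\bar k$.} Solving the binding system gives $N_K^{VI}=\frac{2b(\theta-c)-a-bl_B}{3(1+2b\gamma)}$ and $N_S^{VI}=N_K^{VI}+\frac{k_S}{1+2b\gamma}$. Hence $\lambda_S=\frac{\theta-c-l_S+\gamma(a+bl_B-2bl_S-2k_S)}{1+2b\gamma}$, which is the expression in Appendix B. Its sign depends on $l_S$, whereas $\bar k$ does not involve $l_S$ at all. In fact $k_S<\bar k$ is equivalent to $p^{VI}_{eqm}>l_B$, i.e.\ $Q_B=b(p_{eqm}-l_B)>0$. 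That condition makes $B$ an interior residual monopolist, so it is what legitimately underwrites the slope $-1/2b$. It says nothing about the sign of $\lambda_S$.

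\emph{A concrete check.} Take the paper's illustrative parameters ($\theta=8$, $a=11$, $\gamma=0.1$, $c=0.5$, $l_S=1$, $l_B=2$, $b=1$). Then $\underline k\approx 3.3$ and $\bar k\approx 60.7$, and Assumption~\ref{assn:demand-small} holds for all $k_S\ge 4.7$. Yet $\lambda_S>0$ requires $k_S<38$. At $k_S=50$ both assumptions hold, $m_S\approx 7.8>0$ and $Q_B\approx 0.9>0$, but the binding solution has $\lambda_S\approx -2$.

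\emph{Why the complementary-slackness fallback fails.} Unserved demand ($m_S<N_K+X(p)$) does not make extra capacity valuable. The marginal value of an external sale is $p_{eqm}-m_S/2b$, not $p_{eqm}$, and the marginal captive value $R'(N_S)-c$ falls as $N_S$ grows. Both can drop below $l_S$ while $B$ still serves positive residual demand, which is exactly what happens in the example.
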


I refer to \(\lambda_S\) (the shadow price of launcher \(S\)'s capacity
constraint) as the capacity rent. It is the premium a scarce launch slot
earns above its marginal cost \(l_S\), reflecting the internal margin
\(S\) forgoes by selling a launch rather than flying its own satellite.

\begin{restatable}[Vertical integration and the equilibrium launch price]{prop}{propLaunchPrice}
\label{prop:vertical-integration-launch-price}
Given Assumptions \ref{assn:demand-small}, \ref{assn:capacity-binds}, and efficient rationing, the equilibrium launch price is launcher $B$'s residual-monopoly price on the demand $S$ leaves unserved, which decomposes into $S$'s opportunity cost of an external launch plus $B$'s residual markup,
\begin{equation}
p^{VI}_{eqm} = \frac{a + N_K - m_S + b l_B}{2b} = \big(R'(N_S) - c\big) + \frac{m_S}{2b},
\end{equation}
and lies strictly above the non-integrated price,
\begin{equation}
p^{VI}_{eqm} - p^{NI}_{eqm} = \frac{k_S}{6 b (1 + 2 b\gamma)} > 0.
\end{equation}
Changes to the integrated firm's launch cost are not passed through:
\begin{equation}
\frac{\partial p^{VI}_{eqm}}{\partial l_S} = 0.
\end{equation}
\end{restatable}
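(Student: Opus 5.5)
The plan is to solve the configuration-$VI$ market game by backward reasoning on the launch market and then compare the resulting linear system with the configuration-$NI$ system. First, I will derive launcher $B$'s pricing rule. Under efficient rationing and Assumption \ref{assn:capacity-binds}, $S$ serves $m_S$ external launches at capacity. The residual demand facing $B$ is then $N_K + X(p) - m_S = a + N_K - m_S - bp$. Because $B$ has unlimited capacity, it solves $\max_p (p - l_B)(a + N_K - m_S - bp)$, which gives
\[
p^{VI}_{eqm} = \frac{a + N_K - m_S + b l_B}{2b}.
\]
This is the first expression in the statement. It immediately yields $\partial p/\partial m_S = -1/(2b)$ and $\partial p/\partial N_K = 1/(2b)$, which the other players internalize.

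Second, I will write the integrated firm's joint profit using $N_S = k_S - m_S$:
\[
\big(\theta - \gamma(N_S + N_K)\big)N_S - cN_S + p\,m_S - l_S k_S .
\]
The term $l_S k_S$ is fixed, so $l_S$ never enters any first-order condition. Differentiating in $m_S$ and using $\partial p/\partial m_S = -1/(2b)$ reproduces \eqref{eqn:opp-cost-engine}, which is the content of Lemma \ref{lemm:opportunity-cost}. Rearranged, it gives the decomposition $p^{VI}_{eqm} = (R'(N_S) - c) + m_S/(2b)$. Since neither $B$'s pricing rule, $K$'s first-order condition, nor $S$'s first-order condition contains $l_S$, the whole equilibrium is independent of $l_S$. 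Hence $\partial p^{VI}_{eqm}/\partial l_S = 0$, provided Assumptions \ref{assn:demand-small} and \ref{assn:capacity-binds} keep us in the interior regime with $0 < m_S < N_K + X(p)$ and a binding capacity.

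Third, I will close the $VI$ system with $K$'s first-order condition, in which $K$ internalizes $\partial p/\partial N_K = 1/(2b)$:
\[
\theta - \gamma N_S - 2\gamma N_K - c - p - \frac{N_K}{2b} = 0.
\]
Together with $B$'s pricing rule, $S$'s first-order condition, and $N_S = k_S - m_S$, this gives a linear system in $(p, N_S, N_K, m_S)$. I will solve it explicitly, and I expect it to reproduce the gap $k_S/(1+2b\gamma)$ of Proposition \ref{prop:captive-expansion} as a consistency check.

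For configuration $NI$, I will use the parallel structure. Launcher $S$ now sells all $k_S$ units at the market price, so $B$ prices on the residual $a + N_S + N_K - k_S - bp$. Both constellations are symmetric buyers that internalize a price slope of $1/(2b)$. I will solve the symmetric system and subtract $p^{NI}_{eqm}$ from $p^{VI}_{eqm}$, aiming at $k_S/(6b(1+2b\gamma))$. I expect this $NI$ benchmark to be the main obstacle. The statement does not pin down exactly how the independent launcher $S$ behaves under $NI$: it could be a price-taker at capacity, or it could internalize its effect on $B$'s residual price. I will first try the specification in which $S$ is capacity-bound and sells all $k_S$ at $p$. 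If the algebra does not yield the stated wedge, I will adjust the specification to match the gap formula and the thresholds in Assumptions \ref{assn:demand-small} and \ref{assn:capacity-binds}, whose denominators $6b\gamma$ and $1+6b\gamma$ hint at the factor $6$. Finally, I will check that these assumptions guarantee the binding and interiority conditions used at each step.
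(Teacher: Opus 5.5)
Your proposal is correct and follows the paper's proof essentially step for step: $B$'s residual-monopoly first-order condition, the integrated firm's allocation condition with $l_S k_S$ dropping out as a fixed cost (which gives both the decomposition and $\partial p^{VI}_{eqm}/\partial l_S = 0$), $K$'s price-internalizing condition, and an $NI$ benchmark in which launcher $S$ sells all of $k_S$ at the market price while both constellations are symmetric buyers. That first-try $NI$ specification is exactly the paper's and needs no adjustment, since solving the two linear systems gives $6b(1+2b\gamma)\,p^{VI}_{eqm} = 4b(\theta-c) + (1+6b\gamma)(a+bl_B) - 6b\gamma k_S$ and $6b(1+2b\gamma)\,p^{NI}_{eqm} = 4b(\theta-c) + (1+6b\gamma)(a+bl_B) - (1+6b\gamma)k_S$, whose difference is exactly $k_S$.
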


Lemma \ref{lemm:opportunity-cost} shows that the integrated firm's
shadow price of an external launch is the gross revenue from an external
launch minus a wedge \(m_S/2b\), and Proposition
\ref{prop:vertical-integration-launch-price} shows that the equilibrium
launch price is the integrated firm's marginal revenue from another
captive satellite plus the same wedge. Under \(VI\), this wedge leads
\(S\) to supply less externally (\(m_S\) is smaller) to serve its
now-larger captive constellation, which in turn enlarges the residual
demand \(B\) prices as a monopolist. Figure~\ref{fig-prop2} illustrates
the shift in residual demand faced by \(B\) under vertical integration.

\begin{figure}

\centering{

\includegraphics[width=0.9\linewidth,height=\textheight,keepaspectratio]{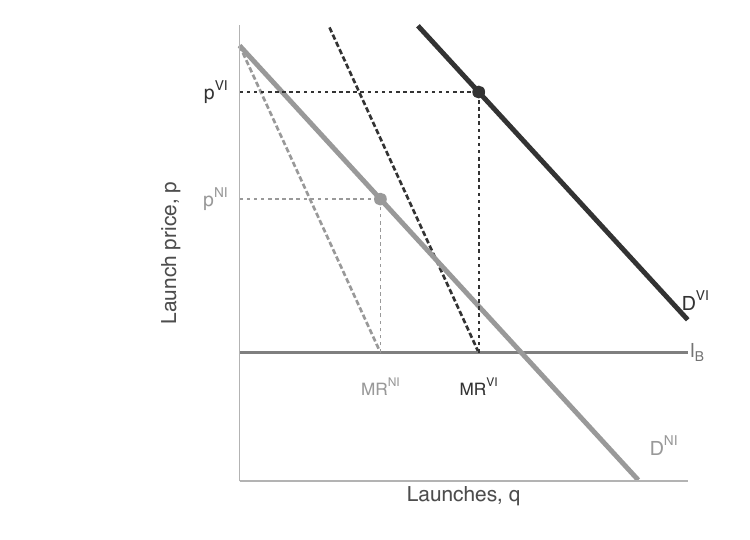}

}

\caption{\label{fig-prop2}\textbf{The launch price under vertical
integration.} Launcher \(B\) is the residual monopolist on the launch
demand \(S\) does not serve, pricing where its marginal revenue equals
its marginal cost \(l_B\). \(D^{NI}\) and \(D^{VI}\) are \(B\)'s
residual demand without and with integration. Under integration the
captive's launches go off-market and \(S\) withholds external supply,
shifting \(B\)'s residual demand outward, so the monopoly price rises
from \(p^{NI}\) to \(p^{VI}\) by \(k_S/(6b(1+2b\gamma))\). The
integrated launcher's own cost \(l_S\) does not enter.
\emph{Illustrative parameters:} \(\theta=8\), \(a=11\), \(\gamma=0.1\),
\(c=0.5\), \(l_S=1\), \(l_B=2\), \(b=1\), \(k_S=12\).}

\end{figure}%

Vertical integration also leaves the independent constellation \(K\)
smaller, through two compounding forces: the captive is larger
(Proposition \ref{prop:captive-expansion}), which crowds \(K\) through
Cournot substitution, and the launch price \(K\) pays is higher
(Proposition \ref{prop:vertical-integration-launch-price}). This can be
seen in Figure~\ref{fig-vi-mechanism}.

\begin{restatable}{coro}{coroSize}
\label{coro:independent-constellation-size-effect}
Given Propositions \ref{prop:captive-expansion} and \ref{prop:vertical-integration-launch-price}, the independent constellation is smaller under configuration $VI$ than under configuration $NI$, $N_K^{VI} < N_K^{NI}$.
\end{restatable}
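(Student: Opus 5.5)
We prove Corollary \ref{coro:independent-constellation-size-effect} by comparing the independent constellation's first-order condition in the two configurations. In both, $K$ chooses $N_K$ to maximize $\big(\theta-\gamma(N_S+N_K)\big)N_K-(c+p_{eqm})N_K$, internalizing its effect on the launch price. This gives a best response of the form
\begin{equation*}
N_K=\frac{\theta-c-p_{eqm}-\gamma N_S}{2\gamma+\partial p_{eqm}/\partial N_K}.
\end{equation*}
Under efficient rationing, $B$'s residual-monopoly price has $\partial p_{eqm}/\partial N_K=1/(2b)$ in both configurations, by the formula in Proposition \ref{prop:vertical-integration-launch-price}. The denominator is therefore the same positive constant $2\gamma+1/(2b)$ under $NI$ and $VI$. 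Differencing the two best responses at their respective equilibria gives
\begin{equation*}
\Big(2\gamma+\tfrac{1}{2b}\Big)\big(N_K^{VI}-N_K^{NI}\big)=-\big(p^{VI}_{eqm}-p^{NI}_{eqm}\big)-\gamma\big(N_S^{VI}-N_S^{NI}\big).
\end{equation*}

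The first term on the right is strictly negative, since Proposition \ref{prop:vertical-integration-launch-price} gives $p^{VI}_{eqm}-p^{NI}_{eqm}=k_S/(6b(1+2b\gamma))>0$. It remains to sign the second term, $N_S^{VI}-N_S^{NI}$.

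\textbf{Expressing $N_S^{VI}-N_S^{NI}$ through $N_K$.}
\begin{itemize}
\item In $NI$ the two independent constellations are symmetric, so $N_S^{NI}=N_K^{NI}$.
\item Proposition \ref{prop:captive-expansion} gives $N_S^{VI}=N_K^{VI}+k_S/(1+2b\gamma)$.
\item Writing $\Delta=N_K^{VI}-N_K^{NI}$, these combine to $N_S^{VI}-N_S^{NI}=\Delta+k_S/(1+2b\gamma)$.
\end{itemize}

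Substituting into the differenced condition yields
\begin{equation*}
\Big(3\gamma+\tfrac{1}{2b}\Big)\Delta=-\frac{k_S}{6b(1+2b\gamma)}-\frac{\gamma k_S}{1+2b\gamma}<0.
\end{equation*}
Hence $\Delta<0$, i.e.\ $N_K^{VI}<N_K^{NI}$. This matches the two compounding forces named in the text: the higher launch price and the Cournot crowding by the larger captive.

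\textbf{Main obstacle.} The delicate step is justifying that $K$'s perceived price slope $\partial p_{eqm}/\partial N_K$ is the same in both configurations, and that the price gap from Proposition \ref{prop:vertical-integration-launch-price} is evaluated at the same equilibrium objects used here.

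Under $VI$, $m_S$ responds to $N_K$ through $S$'s joint optimization, which could alter the slope. I would first follow the footnoted convention: $K$ internalizes only the direct residual-demand effect, holding $m_S$ fixed in the Nash sense, so the slope is $1/(2b)$ in both cases.

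If the appendix instead defines the slope differently, I would fall back on the explicit closed forms for $N_K^{NI}$ and $N_K^{VI}$ from the proofs of the two propositions and subtract them directly. Assumption \ref{assn:capacity-binds} ($k_S>0$, interior $m_S$) then guarantees the difference is a negative multiple of $k_S$.
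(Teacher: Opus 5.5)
Your proof is correct. It follows the same two-force logic as the paper, which gives no appendix proof of this corollary and argues only verbally in the text: $K$'s effective cost $c+p_{eqm}$ rises by Proposition~\ref{prop:vertical-integration-launch-price}, and the ``larger captive'' of Proposition~\ref{prop:captive-expansion} crowds $K$ out.

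Your version adds the step the paper glosses over. Proposition~\ref{prop:captive-expansion} compares $N_S^{VI}$ with $N_K^{VI}$ within a single configuration, so it does not by itself sign the cross-configuration change $N_S^{VI}-N_S^{NI}$ that the crowding channel requires. Your substitution via $N_S^{NI}=N_K^{NI}$ turns this into a linear equation in $\Delta$. Solving it gives $\Delta=-k_S/(3(1+2b\gamma))$, and hence $N_S^{VI}-N_S^{NI}=2k_S/(3(1+2b\gamma))>0$, which confirms that the two forces compound rather than offset.

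This agrees with solving the model directly. With $D=2b(\theta-c)-a-bl_B$, one gets $N_K^{VI}=D/(3(1+2b\gamma))$, which is flat in $k_S$ as in the paper's figure, and $N_K^{NI}=(D+k_S)/(3(1+2b\gamma))$.

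Your stated obstacle is resolved in your favor. The proofs of Propositions~\ref{prop:captive-expansion} and~\ref{prop:vertical-integration-launch-price} use $\partial p_{eqm}/\partial N_K=1/(2b)$ with $m_S$ held fixed in both configurations, so the fallback is not needed.
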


Formally, the effective per-satellite cost \(K\) pays is \(c + p_{eqm}\)
in both configurations. Proposition
\ref{prop:vertical-integration-launch-price} shows
\(p^{VI}_{eqm} > p^{NI}_{eqm}\), so \(K\)'s launch cost is higher under
integration. Together with the larger captive crowding the services
market, both forces leave \(K\) smaller.

It also follows that the integrated launcher's cumulative launch
experience will exceed the non-integrated launcher's. If
learning-by-doing applies, it will have a lower marginal launch cost
than the non-integrated launcher.

\begin{restatable}{coro}{coroLearning}
\label{coro:learning-effects}
Let each launcher's marginal cost be a decreasing function of its equilibrium launch volume, $l_i = l(Q_i)$ with $l'(\cdot) \leq 0$ and strict inequality over a non-empty interval. Under Assumption \ref{assn:demand-small}, $Q_S \geq Q_B$, so $l_S \leq l_B$.
\end{restatable}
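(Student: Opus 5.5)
Since $l(\cdot)$ is weakly decreasing, $Q_S \geq Q_B$ immediately gives $l(Q_S) \leq l(Q_B)$, i.e. $l_S \leq l_B$. The substance is therefore the volume comparison. By Assumption~\ref{assn:capacity-binds}, $S$'s capacity binds, so $Q_S = k_S = N_S + m_S$. Launcher $B$ serves the residual, $Q_B = N_K + X(p_{eqm}) - m_S$. I will use the closed-form equilibrium from the proofs of Propositions~\ref{prop:captive-expansion} and \ref{prop:vertical-integration-launch-price}. This consists of the joint first-order condition \eqref{eqn:opp-cost-engine}, $K$'s Cournot first-order condition at cost $c+p_{eqm}$ with its price-impact term, and $B$'s residual-monopoly price $p^{VI}_{eqm} = (a+N_K-m_S+bl_B)/(2b)$.

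The key simplification comes from the monopoly price. Substituting it gives
\[
Q_B = N_K - m_S + a - b\,p^{VI}_{eqm} = \frac{a + N_K - m_S - b l_B}{2}.
\]
This is linear in primitives once $N_K$ and $m_S$ are written in closed form. The condition $Q_B \leq k_S$ then becomes a linear inequality in $a$. I expect that solving it for $a$ yields exactly the threshold $\bar a = \tfrac{6 k_S(1 + 3 b\gamma) + b(4c + 5 l_B + 6 b\gamma l_B - 4\theta)}{1 + 6 b\gamma}$ of Assumption~\ref{assn:demand-small}. The statement of that assumption ("so that in equilibrium launcher $B$ launches no more than launcher $S$") says as much.

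The concrete steps are as follows.
\begin{enumerate}
\item Write the equilibrium $N_K$ and $m_S$ as linear functions of $(a,k_S,\theta,c,l_B)$, using $N_S = N_K + k_S/(1+2b\gamma)$ from Proposition~\ref{prop:captive-expansion} and $m_S = k_S - N_S$.
\item Substitute these into $Q_B$ and form $k_S - Q_B$.
\item Verify that $k_S - Q_B$ is proportional, with a positive coefficient, to $\bar a - a$.
\end{enumerate}
The positivity of that coefficient should come from the factor $1+6b\gamma>0$ together with $b,\gamma>0$.

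The main obstacle is the bookkeeping in step 1, in particular getting $K$'s first-order condition right once it internalizes the slope of $p_{eqm}$ in $N_K$, which is $1/(2b)$. If the coefficient's sign is not immediate, I would check the boundary case $a=\bar a$ and show that it gives $Q_B = k_S$, with $Q_B$ increasing in $a$. The latter follows because $\partial Q_B/\partial a$ is positive given that $N_K$ and $m_S$ respond with slopes smaller than one.

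Finally, the strict-inequality clause on $l'$ is only needed to make the cost advantage strict when $Q_S > Q_B$ and both volumes lie in that interval. The weak conclusion $l_S \leq l_B$ requires only monotonicity. I would note that treating $l_S$ and $l_B$ as fixed while computing volumes is consistent: as Proposition~\ref{prop:vertical-integration-launch-price} shows, $l_S$ does not enter the equilibrium at all. I would also observe that the threshold $\bar a$ depends on $l_B$, so the argument should be read as a consistency or fixed-point statement at the equilibrium costs.
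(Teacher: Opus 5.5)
Your proposal is correct and matches the argument the paper relies on: the paper gives no separate proof, because Assumption~\ref{assn:demand-small} is built exactly as the threshold ensuring \(Q_B \le k_S = Q_S\), after which monotonicity of \(l\) gives \(l_S \le l_B\). Your step 3 does close as you expect. Using \(N_K^{VI} = \frac{2b(\theta-c)-a-bl_B}{3(1+2b\gamma)}\) and \(m_S = \frac{2b\gamma k_S}{1+2b\gamma} - N_K^{VI}\), you get \(k_S - Q_B = \frac{(1+6b\gamma)(\bar a - a)}{6(1+2b\gamma)}\), whose coefficient is positive.
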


Proposition \ref{prop:captive-expansion} shows that the captive expands
even at equal launch costs, so launcher \(S\)'s larger volume (and the
cost edge it earns) does not presuppose the ordering it produces. For
the remainder of the analysis I take \(l_S < l_B\) under configuration
\(VI\) as a maintained primitive, with learning as its empirical
justification.\footnote{As the integrated constellation scales up,
  scarcity of productive inputs to \(S\)'s launch operations may bind
  and push launch costs back up. I am neglecting these effects since
  learning-by-doing operates over time, and I am simplifying the model
  to a single period for expositional clarity. A more detailed model
  with learning over time and U-shaped marginal costs within a period
  may reveal richer dynamics, though it will not change the basic point:
  vertical integration can give an asymmetric cost advantage to the
  integrated firm's launch operations.} Note that when Corollary
\ref{coro:independent-constellation-size-effect} holds, launcher \(B\)
may have less cumulative launch experience in the vertically integrated
case than in the non-integrated case. Given learning-by-doing, its
marginal cost \(l_B\) (and hence the launch price it sets) may be raised
higher when \(S\) vertically integrates.

\section{Vertical integration and
entry}\label{vertical-integration-and-entry}

Having examined how vertical integration reshapes equilibrium
constellation sizes and the launch price, I turn to entry. The forces
identified in Propositions \ref{prop:captive-expansion} and
\ref{prop:vertical-integration-launch-price} push in opposing directions
in the launch and constellation segments and generate asymmetric
foreclosure. Rents in the launch segment attract new launchers, which
may not reduce the price of launch if the entrants cannot serve all the
residual demand. The higher launch price, combined with the integrated
firm's dominance in the constellation segment, makes entry less
attractive to new constellations.

\subsection{The setting}\label{the-setting-1}

A new launcher \(E\) with marginal cost \(l_E \in (l_S, l_B)\) and
launch capacity \(k_E \in (0,\infty)\) may enter the launch segment by
paying fixed cost \(F_l\). Similarly, a new constellation \(A\) with
marginal per-satellite production and operation cost \(c_A \leq c\) may
enter the constellation segment by paying fixed cost \(F_c\). Formally,
let \(Q_E\) be the total launch demand launcher \(E\) may capture in
equilibrium. Launcher \(E\) will enter if and only if \begin{equation}
(p_{eqm} - l_E)Q_E > F_l.
\end{equation}

Similarly, let \(N_A\) be the equilibrium size of constellation \(A\).
Constellation \(A\) will enter if and only if \begin{equation}
R(N_A) - (c_A + p_{eqm}) N_A > F_c.
\end{equation}

\subsection{Entry in the launch and constellation
segments}\label{entry-in-the-launch-and-constellation-segments}

I first consider entry into the launch segment. Whether vertical
integration helps or hurts a launch entrant turns on the entrant's own
capacity: it raises the maximum sustainable fixed cost for a
capacity-limited entrant, but is neutral to one large enough to compete
the launch price down to \(l_B\).

\begin{restatable}[Vertical integration and launch segment entry]{prop}{propLaunchEntry}
\label{prop:vertical-integration-launch-entry}
Let $\hat{p}_{eqm}$ denote the post-entry launch price and $Q_E$ the volume $E$ captures, so the entrant's maximum sustainable fixed cost is $\tilde{F}_l = (\hat{p}_{eqm} - l_E)\,Q_E$. Its response to vertical integration depends on the entrant's capacity $k_E$.
\begin{enumerate}
\item \emph{Limited capacity $k_E$} ($E$ sells $Q_E = k_E$ inframarginally as a price-taker while $B$ prices the residual). Under efficient rationing, vertical integration \emph{raises} the maximum sustainable fixed cost,
\begin{equation}
\tilde{F}_l^{VI} - \tilde{F}_l^{NI} = \frac{k_E\, k_S}{6 b (1 + 2 b\gamma)} > 0,
\end{equation}
so it attracts launch entry.
\item \emph{Unlimited capacity.} Since $l_E < l_B$, $E$ undercuts $B$ and competes the launch price to $l_B$ in both configurations, so the maximum sustainable fixed costs coincide, $\tilde{F}_l^{VI} = \tilde{F}_l^{NI}$: integration is neutral to launch entry.
\end{enumerate}
\end{restatable}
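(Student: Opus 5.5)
The proof splits by the entrant's capacity, as in the statement, and reuses the residual-monopoly structure behind Proposition \ref{prop:vertical-integration-launch-price}.

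\emph{Limited capacity.} Under efficient rationing, a capacity-limited entrant $E$ with $l_E<l_B$ is inframarginal. Its $k_E$ launches are removed from the demand that $B$ faces, just as $S$'s external supply $m_S$ is. Launcher $B$ then prices the residual $N_K + X(p) - m_S - k_E$, so the post-entry price is
\[
\hat p_{eqm} = \frac{a + N_K - m_S - k_E + b\,l_B}{2b},
\]
in each configuration. I would argue that $E$, taking $\hat p_{eqm}$ as given and being strictly below it in cost, sells its full capacity, so $Q_E=k_E$.

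The key step is that $k_E$ enters additively and identically in both configurations. Re-solving the system from the proof of Proposition \ref{prop:vertical-integration-launch-price} with the shifted residual gives the same structure as before:
\begin{itemize}
\item $S$'s joint first-order condition (Lemma \ref{lemm:opportunity-cost});
\item $K$'s Cournot condition, in which $K$ internalizes the price;
\item $B$'s residual-monopoly price.
\end{itemize}
The entrant's supply only shifts the intercept $a\mapsto a-k_E$, and it does so symmetrically under $NI$ and $VI$. The comparison term $k_S/(6b(1+2b\gamma))$ in Proposition \ref{prop:vertical-integration-launch-price} does not depend on $a$, so $\hat p^{VI}_{eqm}-\hat p^{NI}_{eqm}$ equals that same expression. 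Multiplying by the common quantity $Q_E=k_E$, and noting that $l_E$ cancels, yields
\[
\tilde F_l^{VI}-\tilde F_l^{NI}=\frac{k_E k_S}{6b(1+2b\gamma)}>0.
\]

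\emph{Unlimited capacity.} Because $l_E<l_B$, a standard Bertrand argument applies. Any price above $l_B$ is undercut by $E$, and any price below $l_B$ drives $B$ out. The equilibrium therefore has $E$ pricing at $l_B$ (the limit price), $B$ selling nothing, and $\hat p_{eqm}=l_B$ in both configurations. The volume $E$ captures is all residual demand at $l_B$, and this is where a subtlety arises. Under $VI$, the captive and $S$'s withholding change the residual volume, so $Q_E$ could in principle differ across configurations. I would handle this by invoking the neutrality notion in the statement: the margin $\hat p_{eqm}-l_E=l_B-l_E$ is identical, and I would show the residual volume $N_K+X(l_B)-m_S$ coincides as well. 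The first check is whether, once $\hat p=l_B$, the constellation allocations make $S$'s total launches off the external market equal across configurations. Here $S$ sells its full capacity $k_S$ either to its captive or externally. If that identity fails, I would fall back to stating neutrality in terms of the price and margin.

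\emph{Main obstacle.} The hardest step is verifying in the limited case that the entrant's capacity does not interact with the $VI$/$NI$ wedge. This requires that Assumptions \ref{assn:demand-small}--\ref{assn:capacity-binds} still hold with $a$ replaced by $a-k_E$, so that $S$ remains capacity-bound with $m_S>0$ and the same linear solution applies. I would state this as a maintained restriction on $k_E$.
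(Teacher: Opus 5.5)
Your limited-capacity argument is the paper's own. $E$'s $k_E$ launches shift $B$'s residual demand by $a\mapsto a-k_E$ identically in both configurations. The wedge $k_S/(6b(1+2b\gamma))$ from Proposition~\ref{prop:vertical-integration-launch-price} is free of $a$: in the linear solution the term $2b(\theta-c)-a-bl_B$ cancels in the price difference. Multiplying the wedge by $Q_E=k_E$ gives the result. Your caveat that the assumptions must still hold at $a-k_E$ is a fair addition that the paper leaves implicit.

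The gap is in the unlimited-capacity case. The statement claims $\tilde F_l^{VI}=\tilde F_l^{NI}$, which needs $Q_E^{VI}=Q_E^{NI}$, not just equal margins $l_B-l_E$. Your fallback of ``stating neutrality in terms of the price and margin'' would prove something strictly weaker. Your proposed check, that $S$ exhausts $k_S$ either way, is also not enough. Since $Q_E=N_S+N_K+X(l_B)-k_S$, you need $N_S+N_K$ to coincide across configurations, and Proposition~\ref{prop:captive-expansion} might suggest it does not.

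The missing idea is that once $E$ supplies perfectly elastically at $\hat p_{eqm}=l_B$, nobody moves the price: $\partial\hat p_{eqm}/\partial m_S=\partial\hat p_{eqm}/\partial N_K=0$.
\begin{itemize}
\item The integrated firm's first-order condition in $m_S$ then gives $\lambda_S=l_B-l_S>0$, so capacity binds given the maintained $l_S<l_B$.
\item Its captive condition becomes $R'(N_S)-c=l_S+\lambda_S=l_B$, exactly what a non-integrated constellation buying at $l_B$ faces.
\item $K$'s condition is $R'(N_K)-c=l_B$ in both configurations.
\end{itemize}
The two terms that generated the captive expansion, $-m_S/(2b)$ for $S$ and $+N_K/(2b)$ for $K$, are both price-impact terms, and both vanish. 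So both configurations yield the same symmetric Cournot outcome $N_S=N_K=(\theta-c-l_B)/(3\gamma)$, and hence the same $Q_E$. If that common size exceeds $k_S$, you additionally need the integrated firm to be able to buy the shortfall from $E$ at $l_B$; the conclusion is then unchanged. This is the content the paper compresses into ``the downstream allocation the same at that price,'' and it is the step your proposal leaves open.
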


The intuition is simple: the capacity rent that holds the launch price
above cost (Proposition \ref{prop:vertical-integration-launch-price}) is
a rent a capacity-limited entrant can capture, so integration makes the
launch market more attractive when the entrant has limited capacity and
cannot fully erode the rent.\footnote{Note that in the unlimited
  capacity case, the new entrant effectively displaces the old
  less-efficient firm. As long as \(B\) may still compete, the launch
  market price is pinned at \(l_B\); if \(B\) exits, \(E\) may then take
  its place as a monopolist on residual demand.} On the other hand, in
the constellation segment the new entrant faces lower rents under
vertical integration than when \(S\) is not integrated. Proposition
\ref{prop:vertical-integration-constellation-entry} shows this formally.

\begin{restatable}[Vertical integration and constellation segment entry]{prop}{propConstellationEntry}
\label{prop:vertical-integration-constellation-entry}
Let the maximum sustainable fixed cost in the constellation segment be $\tilde{F}_c = R(N_A) - (c_A + p_{eqm})N_A$, the entrant's post-entry operating profit. Given Assumptions \ref{assn:demand-small} and \ref{assn:capacity-binds}, vertical integration lowers it whenever the entrant is active in both configurations,
\begin{equation}
\tilde{F}_c^{VI} - \tilde{F}_c^{NI} = -\frac{k_S\big(N_A^{VI} + N_A^{NI}\big)}{8b} = -\frac{k_S\big(2b(2\theta + 4c - 6 c_A - l_B) + k_S - 2a\big)}{32\,b\,(1 + 2 b\gamma)} < 0:
\end{equation}
vertical integration \emph{deters} constellation entry. The entrant is active under $VI$ only if it is efficient enough,
\begin{equation}
c_A < \frac{b(2\theta + 4c - l_B) - a}{6b},
\end{equation}
a threshold the non-integrated configuration reduces by $k_S/6b$, so that some entrants which may enter without integration are foreclosed by it.
\end{restatable}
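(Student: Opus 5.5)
The plan is to re-solve the post-entry equilibrium in each configuration with exactly the machinery used for Propositions \ref{prop:captive-expansion} and \ref{prop:vertical-integration-launch-price}, now with three constellations $S$, $K$, $A$ in the services market. Under $NI$ there are three independent buyers. Under $VI$ the integrated firm has the joint first-order condition (\ref{eqn:opp-cost-engine}) with binding capacity $N_S + m_S = k_S$. In both configurations, launcher $B$ sets the residual-monopoly price $p_{eqm} = (a + N_K + N_A - m_S + b l_B)/2b$ on the demand $S$ leaves unserved, under efficient rationing. The entrant $A$ then solves
\[
\max_{N_A}\ \big(\theta - \gamma(N_S+N_K+N_A) - c_A - p_{eqm}\big)N_A ,
\]
treated the same way as $K$, including whether it internalizes its effect on $p_{eqm}$. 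Its first-order condition is linear.

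Everything in the system is linear. I will therefore write it as a linear system in $(N_S, N_K, N_A, m_S, p_{eqm})$ and solve it in closed form in each configuration. The first goal is the \emph{quantity} gap $N_A^{VI} - N_A^{NI}$. By analogy with Proposition \ref{prop:vertical-integration-launch-price}, where the captive's withdrawal of external supply shifts $B$'s residual demand out by an amount proportional to $k_S$, I expect this gap to be a negative constant proportional to $k_S$ and independent of $c_A$.

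The key simplification for the profit comparison is the standard Cournot identity. Substituting $A$'s first-order condition into its profit gives $\tilde F_c = \kappa N_A^2$, where $\kappa$ is the total slope of $A$'s perceived inverse demand: $\gamma$, plus the price-impact term if $A$ internalizes the launch price. Then
\[
\tilde F_c^{VI} - \tilde F_c^{NI} = \kappa\,(N_A^{VI} - N_A^{NI})(N_A^{VI} + N_A^{NI}).
\]
Given the quantity gap, the first displayed form $-k_S(N_A^{VI}+N_A^{NI})/8b$ follows directly. For the second form, I will substitute the closed-form $N_A^{VI} + N_A^{NI}$. Negativity then follows because the requirement that $A$ be active in both configurations makes the sum positive.

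For the activity threshold, I will set $N_A^{VI} > 0$ in the closed form and rearrange to $c_A < (b(2\theta+4c-l_B)-a)/6b$. The $NI$ threshold differs from this by the constant quantity gap. Dividing that gap by the coefficient of $c_A$ in $N_A$ gives the stated shift $k_S/6b$. This shows that the interval of $c_A$ between the two thresholds is foreclosed.

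The main obstacle is the solution of the post-entry system itself, on two counts. First, I need to confirm that Assumptions \ref{assn:demand-small} and \ref{assn:capacity-binds} still deliver a binding $k_S$ with $0 < m_S < N_K + N_A + X(p)$ after $A$ enters, so that the interior regime of Proposition \ref{prop:captive-expansion} applies. Second, I need the constants to match. The $8b$ and the $(1+2b\gamma)$ in the denominators depend on exactly how $K$ and $A$ internalize the price. I would first solve the $VI$ system by eliminating $m_S = k_S - N_S$ and $p_{eqm}$, reducing it to three equations in $(N_S, N_K, N_A)$. I would then check the result against the gap formula of Proposition \ref{prop:captive-expansion} before differencing against $NI$.
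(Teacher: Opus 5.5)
Your plan is correct and is essentially the paper's proof. With $A$ internalizing the launch price as $K$ does, solving the three-firm linear system in each configuration gives $N_A^{VI}-N_A^{NI}=-k_S/\big(4(1+2b\gamma)\big)$, independent of $c_A$, and $\kappa=(1+2b\gamma)/(2b)$; the difference-of-squares step then yields $-k_S(N_A^{VI}+N_A^{NI})/(8b)$, and dividing the gap by the $c_A$-coefficient $6b/\big(4(1+2b\gamma)\big)$ gives the $k_S/(6b)$ threshold shift, exactly as the paper does (which likewise leaves your post-entry interiority check implicit).
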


Intuitively, the enlarged captive crowds the services market: by Cournot
substitution a larger \(S\) shrinks any entrant \(A\), and the rent
\(A\) could earn falls with it. To match the profit it would earn
without \(S\) integrated, \(A\) must be efficient enough to offset its
size disadvantage against the captive constellation \(S\). Potential
entrants that would clear that bar without integration are therefore
foreclosed by it. Learning-by-doing may compound the disadvantage over
time, as the larger captive's launches feed cost reductions the entrant
cannot match. Figure~\ref{fig-props34-entry} shows both effects.

\begin{figure}

\centering{

\includegraphics[width=1\linewidth,height=\textheight,keepaspectratio]{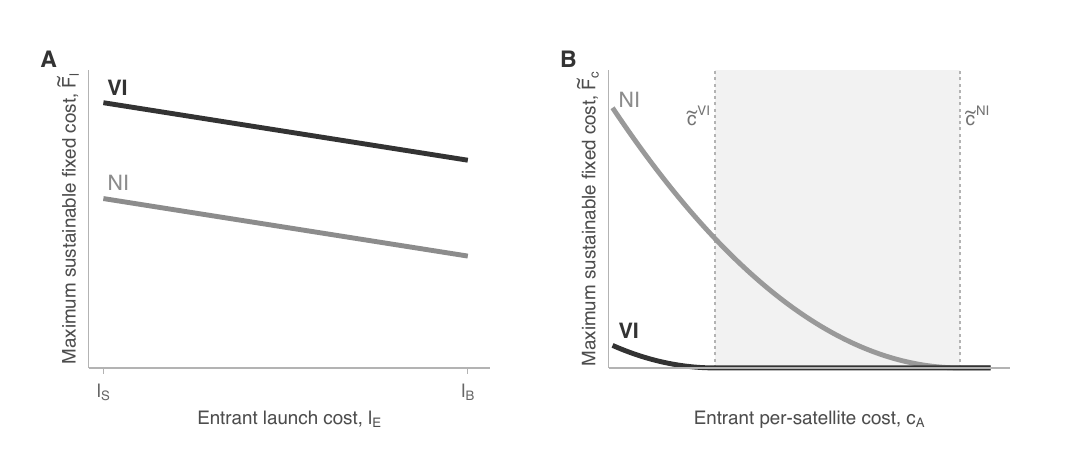}

}

\caption{\label{fig-props34-entry}\textbf{The asymmetric entry effects
of vertical integration.} (A) Launch-segment entry: the maximum
sustainable fixed cost \(\tilde{F}_l\) a capacity-limited entrant can
bear, against its launch cost \(l_E\). Vertical integration raises it
(\(VI\) above \(NI\)), so integration attracts launch entry. (B)
Constellation-segment entry: the maximum sustainable fixed cost
\(\tilde{F}_c\) of a fresh constellation entrant, against its
per-satellite cost \(c_A\). Vertical integration lowers it (\(VI\) below
\(NI\)) and forecloses potential entrants between the \(VI\) and \(NI\)
activity thresholds, \(\tilde{c}^{VI} \le c_A < \tilde{c}^{NI}\)
(shaded), so integration deters constellation entry. \emph{Illustrative
parameters:} \(\theta=8\), \(a=11\), \(\gamma=0.1\), \(c=0.5\),
\(l_S=1\), \(l_B=2\), \(b=1\), \(k_S=12\), \(k_E=2\).}

\end{figure}%

\section{Discussion and conclusion}\label{discussion-and-conclusion}

Vertical integration between a launcher and a constellation may lower
the cost of launch while increasing its market price. The removal of
double marginalization enlarges the captive constellation (Proposition
\ref{prop:captive-expansion}), which may give the integrated launcher
greater cumulative flight experience than competing launchers. Whether
this experience translates to engineering cost reductions or not,
profitability in the constellation segment creates an opportunity cost
to the integrated launcher from allocating scarce launch capacity to
external customers. This allows non-integrated launchers to raise prices
as well (Proposition \ref{prop:vertical-integration-launch-price}). The
opportunity cost thus creates a rent on the integrated launcher's
capacity. Any cost reductions the integrated launcher achieves, such as
through learning-by-doing, are absorbed into the rent rather than passed
through to other launch customers
(\(\partial p^{VI}/\partial l_S = 0\)). That same rent may invite entry
into the launch segment (Proposition
\ref{prop:vertical-integration-launch-entry}), while foreclosing entry
into the constellation segment (Proposition
\ref{prop:vertical-integration-constellation-entry}).

These results are consistent with the stylized facts that motivate the
paper: the vertically integrated constellation is larger than the
non-integrated constellation, and the launch price does not fall despite
the presence of a competitor even if the integrated launcher achieves
cost reductions. As Weinzierl (\citeproc{ref-weinzierl2018jep}{2018})
notes, complementarities in the space sector may push toward integration
and concentration. As a common input to space activities, launch prices
tie the cost of access across markets to what a vertically integrated
launcher may earn from its most profitable constellation segment. High
profitability in one downstream segment may therefore increase launch
prices across satellite markets. In addition to persistent dominance in
profitable markets, commonalities in satellite design across segments
(e.g., consumer broadband and backhaul) may enable a vertically
integrated firm to use its capacity rent to finance horizontal
downstream expansion. If this limits the demand other launchers can
capture, it may create a recurring pattern of launch entrants attracted
by rents but unable to achieve scale due to scarce demand.

The model is deliberately spare. It treats launch as a homogeneous good,
ignoring relevant differentiation along axes like mass-to-orbit and
reachable orbits as modeled in Triezenberg et al.
(\citeproc{ref-triezenberg2024}{2024}) and Colvin, Kim, and Rao
(\citeproc{ref-colvinkimrao2026}{2026}). Many assumptions, such as
linear demand functions and exogenous capacity, are chosen for simple
closed forms. Learning-by-doing in space launch is modeled as a fixed
cost advantage for the firm that integrates. Similar learning effects in
satellite bus production may deepen the foreclosure by enabling
horizontal downstream expansion but are not modeled. These
simplifications allow the model to isolate key economic forces, and open
new ones. For example, when does the integrated firm's cost advantage
compound through sustained learning and entrench the rent? Will vertical
integration by one firm incent integration by others, and if so what
will happen to launch and downstream market prices? Under what
conditions might vertical and horizontal integration across
constellation markets generate enough competition to reduce launch
prices? How might orbital debris or space traffic management constraints
affect the conduct and profitability of integrated and non-integrated
constellations? Answers to these and other economic questions about
proliferated space systems may depend on the capacity rents accruing to
integrated launchers.

\newpage

\section*{References}\label{references}
\addcontentsline{toc}{section}{References}

\phantomsection\label{refs}
\begin{CSLReferences}{1}{0}
\bibitem[\citeproctext]{ref-aghionhowitt1992}
Aghion, Philippe, and Peter Howitt. 1992. {``A Model of Growth Through
Creative Destruction.''} \emph{Econometrica} 60 (2): 323--51.
\url{https://doi.org/10.2307/2951599}.

\bibitem[\citeproctext]{ref-arrow1962}
Arrow, Kenneth J. 1962. {``Economic Welfare and the Allocation of
Resources for Invention.''} In \emph{The Rate and Direction of Inventive
Activity: Economic and Social Factors}, edited by Universities-National
Bureau Committee for Economic Research, 609--26. Princeton, NJ:
Princeton University Press.
\url{https://doi.org/10.1515/9781400879762-024}.

\bibitem[\citeproctext]{ref-chao1983}
Chao, Hung-po. 1983. {``Peak Load Pricing and Capacity Planning with
Demand and Supply Uncertainty.''} \emph{Bell Journal of Economics} 14
(1): 179--90. \url{https://doi.org/10.2307/3003545}.

\bibitem[\citeproctext]{ref-colvinkimrao2026}
Colvin, Thomas J., Moon J. Kim, and Akhil Rao. 2026. {``{SCRUBBED}:
America's Launch Capacity Challenge.''} Washington, D.C.: Commercial
Space Federation.
\url{https://commercialspace.org/news_events/scrubbed/}.

\bibitem[\citeproctext]{ref-doyle2026}
Doyle, Chris. 2026. {``Orbital Congestion and Satellite Broadband
Competition: Oligopoly, Innovation, and Second-Best Regulation.''}
\emph{Telecommunications Policy} 50 (6): 103198.
\url{https://doi.org/10.1016/j.telpol.2026.103198}.

\bibitem[\citeproctext]{ref-foust2022kuiper}
Foust, Jeff. 2022. {``Amazon Signs Multibillion-Dollar {P}roject
{K}uiper Launch Contracts.''} SpaceNews. April 5, 2022.
\url{https://spacenews.com/amazon-signs-multibillion-dollar-project-kuiper-launch-contracts/}.

\bibitem[\citeproctext]{ref-gilbertnewbery1982}
Gilbert, Richard J., and David M. G. Newbery. 1982. {``Preemptive
Patenting and the Persistence of Monopoly.''} \emph{American Economic
Review} 72 (3): 514--26. \url{https://www.jstor.org/stable/1831552}.

\bibitem[\citeproctext]{ref-guyotetal2023}
Guyot, Julien, Akhil Rao, and Sébastien Rouillon. 2023. {``Oligopoly
Competition Between Satellite Constellations Will Reduce Economic
Welfare from Orbit Use.''} \emph{Proceedings of the National Academy of
Sciences} 120 (43): e2221343120.
\url{https://doi.org/10.1073/pnas.2221343120}.

\bibitem[\citeproctext]{ref-harttirole1990}
Hart, Oliver, and Jean Tirole. 1990. {``Vertical Integration and Market
Foreclosure.''} \emph{Brookings Papers on Economic Activity:
Microeconomics}, 205--86. \url{https://doi.org/10.2307/2534783}.

\bibitem[\citeproctext]{ref-kim2025counting}
Kim, Moon J. 2025. {``Counting Stars and Costs: An Empirical Examination
of Space Launch Cost Trend at {NASA}.''} \emph{Acta Astronautica} 232:
633--39. \url{https://doi.org/10.1016/j.actaastro.2025.04.011}.

\bibitem[\citeproctext]{ref-nasaceh2004}
National Aeronautics and Space Administration. 2004. {``NASA Cost
Estimating Handbook.''} National Aeronautics; Space Administration.

\bibitem[\citeproctext]{ref-ordoversalonersalop1990}
Ordover, Janusz A., Garth Saloner, and Steven C. Salop. 1990.
{``Equilibrium Vertical Foreclosure.''} \emph{American Economic Review}
80 (1): 127--42. \url{https://www.jstor.org/stable/2006738}.

\bibitem[\citeproctext]{ref-raocolvin2025}
Rao, Akhil, and Thomas J. Colvin. 2025. {``Opportunity Costs Drive the
Market Price of {S}tarship Launches.''} Washington, D.C.: Rational
Futures.
\url{https://rationalfutures.com/2025/10/opportunity-costs-drive-the-marketprice-of-starship-launches/}.

\bibitem[\citeproctext]{ref-reinganum1983}
Reinganum, Jennifer F. 1983. {``Uncertain Innovation and the Persistence
of Monopoly.''} \emph{American Economic Review} 73 (4): 741--48.
\url{https://authors.library.caltech.edu/records/kwk7s-czd55}.

\bibitem[\citeproctext]{ref-reytirole2007primer}
Rey, Patrick, and Jean Tirole. 2007. {``A Primer on Foreclosure.''} In
\emph{Handbook of Industrial Organization}, edited by Mark Armstrong and
Robert Porter, 3:2145--2220. North-Holland.
\url{https://doi.org/10.1016/S1573-448X(06)03033-0}.

\bibitem[\citeproctext]{ref-salinger1988}
Salinger, Michael A. 1988. {``Vertical Mergers and Market
Foreclosure.''} \emph{Quarterly Journal of Economics} 103 (2): 345--56.
\url{https://doi.org/10.2307/1885117}.

\bibitem[\citeproctext]{ref-samuelson1956}
Samuelson, Paul A. 1956. {``Social Indifference Curves.''} \emph{The
Quarterly Journal of Economics} 70 (1): 1--22.
\url{https://doi.org/10.2307/1884510}.

\bibitem[\citeproctext]{ref-suyangsweeting2026}
Su, Ruibing, Chenyu Yang, and Andrew Sweeting. 2026. {``Competition,
Procurement and Learning-by-Doing in the Space Launch Industry.''}
Working Paper 34766. NBER Working Paper Series. National Bureau of
Economic Research. \url{https://doi.org/10.3386/w34766}.

\bibitem[\citeproctext]{ref-terzinicoli2024}
Terzi, Alessio, and Francesco Nicoli. 2024. {``Space Possibilities for
Our Grandchildren: Current and Future Economic Uses of Space.''}
European Economy Discussion Paper 211. European Commission,
Directorate-General for Economic; Financial Affairs.
\url{https://economy-finance.ec.europa.eu/publications/space-possibilities-our-grandchildren-current-and-future-economic-uses-space_en}.

\bibitem[\citeproctext]{ref-triezenberg2024}
Triezenberg, Bonnie L., Éder M. Sousa, Emily Allendorf, Hansell Perez,
Jonathan Roberts, and Mack Rodgers. 2024. {``Assessing the Impact of
{U.S.} {Air} {Force} National Security Space Launch Acquisition
Decisions: 2023 Update.''} RR-A2843-1. Santa Monica, CA: RAND
Corporation. \url{https://doi.org/10.7249/RRA2843-1}.

\bibitem[\citeproctext]{ref-weinzierl2018jep}
Weinzierl, Matthew. 2018. {``Space, the Final Economic Frontier.''}
\emph{Journal of Economic Perspectives} 32 (2): 173--92.
\url{https://doi.org/10.1257/jep.32.2.173}.

\bibitem[\citeproctext]{ref-williamson1966}
Williamson, Oliver E. 1966. {``Peak-Load Pricing and Optimal Capacity
Under Indivisibility Constraints.''} \emph{American Economic Review} 56
(4): 810--27. \url{https://www.jstor.org/stable/1813529}.

\bibitem[\citeproctext]{ref-wright1936}
Wright, T. P. 1936. {``Factors Affecting the Cost of Airplanes.''}
\emph{Journal of the Aeronautical Sciences} 3 (4): 122--28.
\url{https://doi.org/10.2514/8.155}.

\end{CSLReferences}

\newpage

\section*{Appendix A: Proofs}\label{appendix-a-proofs}
\addcontentsline{toc}{section}{Appendix A: Proofs}

\propConstellationSize*
\begin{proof}
Assumptions \ref{assn:demand-small} and \ref{assn:capacity-binds} ensure an interior allocation $0 < N_S < k_S$ (so $m_S > 0$), and an active rival $N_K > 0$.

The integrated firm allocates its capacity between captive launches and external sales, solving
\[
\max_{N_S,\, m_S \geq 0} \ \pi_S(N_S, m_S) = R(N_S) - c\,N_S - l_S\,(N_S + m_S) + p_{eqm}\,m_S
\qquad \text{s.t.}\qquad N_S + m_S \leq k_S,
\]
with launcher $B$ serving the residual as a monopolist at $p_{eqm} = (a + N_K - m_S + b l_B)/2b$, so $\partial p_{eqm}/\partial m_S = -1/2b$. Let $\lambda_S \geq 0$ be the multiplier on the capacity constraint. The first-order conditions for $N_S$ and $m_S$ are
\begin{align*}
R'(N_S) - c &= l_S + \lambda_S, \\
p_{eqm} - \tfrac{m_S}{2b} &= l_S + \lambda_S .
\end{align*}
Eliminating $l_S + \lambda_S$ gives the allocation condition
\[
\underbrace{R'(N_S) - c}_{\text{captive margin}} = \underbrace{p_{eqm} - \tfrac{m_S}{2b}}_{\text{external margin}},
\]
in which $l_S$ has cancelled. Constellation $K$ internalizes its own launch-price impact, $\partial p_{eqm}/\partial N_K = 1/(2b)$, so its first-order condition is $R'(N_K) - c = p_{eqm} + N_K/(2b)$. Subtracting $K$'s condition from the integrated firm's and using $R'(N_i) = \theta - \gamma(2N_i + N_j)$ with $m_S = k_S - N_S$,
\[
(1 + 2b\gamma)\,(N^{VI}_S - N^{VI}_K) = k_S ,
\]
so $N^{VI}_S - N^{VI}_K = k_S/(1 + 2b\gamma) > 0$, independent of $l_S$ and hence holding at $l_S = l_B$.
\end{proof}

\lemmOppCost*
\begin{proof}
At the joint optimum the integrated firm's two capacity first-order conditions (proof of Proposition \ref{prop:captive-expansion}) can be written so each equals the multiplier $\lambda_S$: the captive margin gives $\lambda_S = R'(N_S^*) - c - l_S$, and the external-sales margin gives $\lambda_S = (p_{eqm} - l_S) - m_S/(2b)$. Equating the two gives the stated identity.
\end{proof}

\propLaunchPrice*
\begin{proof}
With capacity binding, launcher $S$ sells its full external supply $m_S$ and cannot cover external demand, so launcher $B$ serves the remainder $X(p) + N_K - m_S$ as a monopolist, maximizing $(p - l_B)(a - bp + N_K - m_S)$. Its first-order condition gives $p^{VI}_{eqm} = (a + N_K - m_S + b l_B)/2b$. Under efficient rationing $S$'s capacity serves the top of the demand curve and $B$ is the monopoly supplier of the remainder, so $S$ cannot discipline $B$'s price once its own capacity is exhausted. At the equilibrium the integrated firm's allocation condition $R'(N_S) - c = p^{VI}_{eqm} - m_S/2b$ (equation \ref{eqn:opp-cost-engine}) holds. Rearranging it shows the same price equals $S$'s opportunity cost of an external launch, $R'(N_S) - c$, plus the residual markup $m_S/2b$. This is Lemma \ref{lemm:opportunity-cost}. In configuration $NI$ launcher $S$ is an ordinary capacity-constrained launcher and constellation $S$ an ordinary buyer, so $S$ and $K$ are symmetric independent constellations, each with effective per-satellite cost $c + p^{NI}_{eqm}$ and each internalizing its launch-price impact. Launcher $S$ again sells its full capacity $k_S$, and launcher $B$ is the residual monopolist on $X(p) + N_S + N_K - k_S$, so
\[
p^{NI}_{eqm} = \frac{a + N_S + N_K - k_S + b\,l_B}{2b},
\]
with symmetric buyer first-order conditions $R'(N_i) - c = p^{NI}_{eqm} + N_i/(2b)$ for $i \in \{S, K\}$. By symmetry $N_S^{NI} = N_K^{NI}$, which together with $R'(N_i) = \theta - \gamma(2N_i + N_j)$ pins down the common size and hence $p^{NI}_{eqm}$. Differencing against the integrated equilibrium gives $p^{VI}_{eqm} - p^{NI}_{eqm} = k_S/(6b(1+2b\gamma)) > 0$. Finally, because $N_S + m_S = k_S$, $l_S$ enters the integrated firm's profit only as the constant $l_S k_S$ and so cannot move the equilibrium price: $\partial p^{VI}_{eqm}/\partial l_S = 0$.
\end{proof}

\propLaunchEntry*
\begin{proof}
\emph{Limited capacity.} The entrant sells its full capacity $k_E$ inframarginally at the market price, leaving launcher $B$ the residual monopolist on the demand neither $S$ nor $E$ serves, $X(p) + N_K - m_S - k_E$. Maximizing $(p - l_B)(a - bp + N_K - m_S - k_E)$, its first-order condition gives the post-entry price
\[
\hat{p}_{eqm} = \frac{a + N_K - m_S - k_E + b\,l_B}{2b}.
\]
Because $E$ sells its fixed capacity $k_E$ inframarginally, its entry enters $B$'s inverse residual demand as the constant shift $a \mapsto a - k_E$, identical across configurations. The rest of the post-entry equilibrium re-solves---the integrated firm's allocation condition and $K$'s buyer condition adjust $m_S$, $N_S$, and $N_K$ to the new price---but the cross-world wedge of Proposition \ref{prop:vertical-integration-launch-price} depends only on the triple $(k_S, b, \gamma)$ and not on background demand $a$, so this constant shift leaves it unchanged:
\[
\hat{p}_{eqm}^{VI} - \hat{p}_{eqm}^{NI} = \frac{k_S}{6b(1+2b\gamma)}.
\]
The entrant earns this price on its $k_E$ launches, so its maximum sustainable fixed cost is $\tilde{F}_l = (\hat{p}_{eqm} - l_E)\,k_E$, and the integration effect is that wedge scaled by its capacity:
\[
\tilde{F}_l^{VI} - \tilde{F}_l^{NI} = (\hat{p}_{eqm}^{VI} - \hat{p}_{eqm}^{NI})\,k_E = \frac{k_E\,k_S}{6b(1+2b\gamma)} > 0.
\]
\emph{Unlimited capacity.} With $l_E < l_B$ and no capacity limit, $E$ undercuts $B$ and competes the launch price to $l_B$ in both configurations. With the price equal across configurations and the downstream allocation the same at that price, the entrant's volume and maximum sustainable fixed cost coincide, giving $\tilde{F}_l^{VI} = \tilde{F}_l^{NI}$.
\end{proof}

\propConstellationEntry*
\begin{proof}
A fresh constellation $A$ enters the services market as a third Cournot competitor alongside the captive $S$ and the independent $K$, with effective per-satellite cost $c_A + p_{eqm}$. Like $K$, it internalizes the effect of its launch demand on $p_{eqm}$.

Under $VI$ the captive $S$ is larger by Proposition \ref{prop:captive-expansion}, so by Cournot strategic substitution $A$ is smaller. Solving the three-firm equilibrium in each configuration, the capacity term $k_S$ that drives the captive expansion enters the integrated firm's allocation condition but not the non-integrated buyer's, so the same $k_S$ channel reappears:
\[
N_A^{VI} - N_A^{NI} = -\frac{k_S}{4(1+2b\gamma)} < 0 .
\]
The entrant's maximum sustainable fixed cost is its post-entry operating profit $\tilde{F}_c = R(N_A) - (c_A + p_{eqm})N_A = (P_s - c_A - p_{eqm})N_A$. Its first-order condition $R'(N_A) - c_A = p_{eqm} + N_A/(2b)$, with $R'(N_A) = P_s - \gamma N_A$, gives the margin $P_s - c_A - p_{eqm} = N_A(\gamma + 1/(2b)) = N_A(1+2b\gamma)/(2b)$, so $\tilde{F}_c = \tfrac{1+2b\gamma}{2b}\,N_A^2$, which increases in its size. Differencing across configurations and factoring the difference of squares,
\[
\tilde{F}_c^{VI} - \tilde{F}_c^{NI} = -\frac{k_S\,(N_A^{VI}+N_A^{NI})}{8b} < 0
\]
whenever $A$ is active in both configurations; the launcher cost $l_S$ does not appear. Vertical integration therefore deters constellation entry. Finally, $A$ is active under $VI$ only if it is efficient enough, $N_A^{VI} > 0$, i.e.\ $c_A < (b(2\theta + 4c - l_B) - a)/6b$; the non-integrated threshold is looser by exactly $k_S/(6b)$, so a band of entrants enter without integration but are foreclosed by it. The deterrence is strict even at that margin: at $N_A^{VI}=0$ the wedge equals $-k_S^2/(32\,b(1+2b\gamma)) < 0$.
\end{proof}

\section*{Appendix B: Endogenous
Capacity}\label{appendix-b-endogenous-capacity}
\addcontentsline{toc}{section}{Appendix B: Endogenous Capacity}

\label{app:capacity-choice}

I add a capacity-commitment stage before the market game of the body to
endogenize the integrated firm's capacity choice. In stage 1 the
integrated firm chooses launch capacity \(k_S \geq 0\) at a convex cost
\(C(k_S)\), with \(C' > 0\) and \(C'' > 0\). In stage 2 the simultaneous
game of the main text is played at the chosen \(k_S\). The continuation
is the stage-2 Nash equilibrium of that game: given \(k_S\), the
integrated firm solves \[
\pi_S^{*}(k_S) \;=\; \max_{N_S,\, m_S \geq 0}\ R(N_S) - c\,N_S - l_S\,(N_S + m_S) + p_{eqm}\,m_S
\quad\text{s.t.}\quad N_S + m_S \leq k_S,
\] with launcher \(B\) pricing the residual at
\(p_{eqm} = (a + N_K - m_S + b\,l_B)/2b\) and constellation \(K\)
best-responding, as in the proof of Proposition
\ref{prop:captive-expansion}. Write \(\lambda_S(k_S) \geq 0\) for the
multiplier on the capacity constraint at that equilibrium. Stage 1 then
solves \[
\max_{k_S \geq 0}\ \Pi_S(k_S) \;=\; \pi_S^{*}(k_S) - C(k_S).
\]

The integrated firm builds capacity to the point where the marginal
value of a unit equals its marginal cost. The marginal value is the
capacity rent \(\lambda_S(k_S)\), the multiplier on the stage-2
constraint \(N_S + m_S \leq k_S\); by complementary slackness it is
positive exactly when the firm uses its full capacity. Let
\(k_S^{\mathrm{sat}}\) denote the firm's unconstrained stage-2 launch
demand, the total launches \(N_S + m_S\) it would choose were capacity
not a constraint. For \(k_S \geq k_S^{\mathrm{sat}}\) the stage-2
constraint is slack and \(\lambda_S = 0\); for
\(k_S < k_S^{\mathrm{sat}}\) it binds, and \(\lambda_S(k_S)\) falls to
zero as \(k_S \to k_S^{\mathrm{sat}}\). The firm builds where this rent
equals the marginal cost of capacity, the peak-load investment rule of
Williamson (\citeproc{ref-williamson1966}{1966}) and Chao
(\citeproc{ref-chao1983}{1983}). Because capacity is costly
(\(C' > 0\)), it stops at a fleet \(k_S^{*} < k_S^{\mathrm{sat}}\) whose
rent is still positive. It therefore operates with capacity fully
utilized, producing the binding regime analyzed in the main text
(Assumption \ref{assn:capacity-binds}).
Figure~\ref{fig-capacity-binding} illustrates the intuition and
Proposition \ref{prop:capacity-binds} formalizes.

\begin{figure}

\centering{

\includegraphics[width=0.75\linewidth,height=\textheight,keepaspectratio]{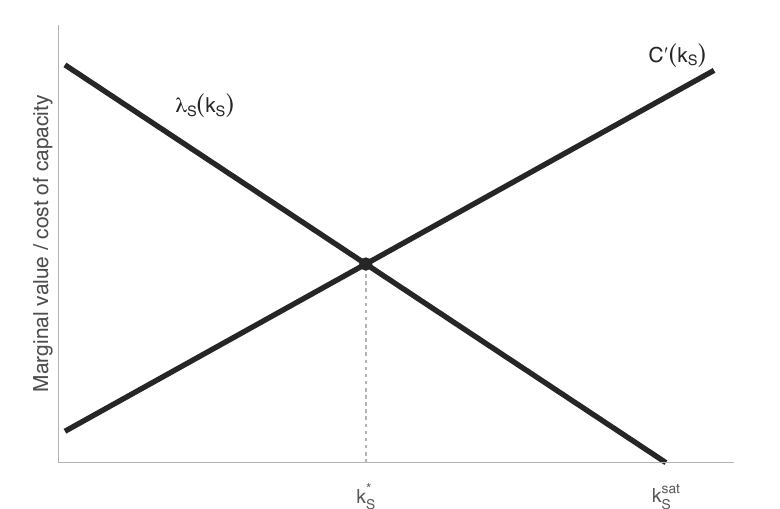}

}

\caption{\label{fig-capacity-binding}\textbf{The optimal capacity
choice.} The integrated firm builds where the capacity rent
\(\lambda_S(k_S)\), the marginal value of a unit of capacity, equals its
marginal cost \(C'(k_S)\). The rent falls to zero at
\(k_S^{\mathrm{sat}}\), the firm's unconstrained stage-2 launch demand;
because \(C' > 0\) the optimum \(k_S^{*}\) lies below
\(k_S^{\mathrm{sat}}\), where the rent remains positive and the capacity
constraint binds. \emph{Demand-side parameters} \(\theta=8\), \(a=11\),
\(\gamma=0.1\), \(c=0.5\), \(l_S=1\), \(l_B=2\), \(b=1\);
\emph{illustrative capacity cost} \(C'(k_S)=w+d\,k_S\) with \(w=0.5\),
\(d=0.14\).}

\end{figure}%

\begin{prop}[Optimal capacity binds]\label{prop:capacity-binds}
Let the integrated firm choose capacity in stage~1 at a convex cost $C$ with $C' > 0$, $C'' > 0$, and $C'(0) < \lambda_S(0)$. Its optimal capacity $k_S^{*}$ is unique and interior, and satisfies
\[
\lambda_S(k_S^{*}) = C'(k_S^{*}) > 0 .
\]
The capacity constraint binds with strictly positive shadow value, and $k_S^{*}$ lies below the satiation level $k_S^{\mathrm{sat}}$ at which $\lambda_S(k_S^{\mathrm{sat}}) = 0$.
\end{prop}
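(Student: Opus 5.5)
The plan is to treat stage 1 as a one-dimensional concave maximization and to identify the derivative of the continuation value with the capacity rent by an envelope argument.

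\textbf{Step 1 (envelope).} Since $\pi_S^*(k_S)$ is the value of the stage-2 problem with $k_S$ entering only through the constraint $N_S + m_S \leq k_S$, I will invoke the envelope theorem for constrained problems to get
\[
\frac{d\pi_S^*}{dk_S} = \lambda_S(k_S).
\]
Here $\lambda_S$ is the multiplier from the first-order conditions in the proof of Proposition~\ref{prop:captive-expansion}, which Lemma~\ref{lemm:opportunity-cost} characterizes. One subtlety is that the continuation equilibrium also moves $N_K$ and $p_{eqm}$ with $k_S$. Because the stage-2 object is a Nash equilibrium rather than a pure optimum, I will state the envelope result with respect to the integrated firm's own problem and treat the rivals' adjustment as part of the equilibrium map $\lambda_S(k_S)$. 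Under linearity everything is affine, so I will also verify it directly. In the binding region the stage-2 system is linear in $(N_S,m_S,N_K)$ with $k_S$ entering linearly, so the equilibrium quantities are affine in $k_S$. Hence $\lambda_S(k_S)=\alpha-\beta k_S$ for constants obtained from Lemma~\ref{lemm:opportunity-cost}, and differentiating $\pi_S^*$ directly should return $\lambda_S$ (up to the strategic terms, which I will check vanish or fold into the definition).

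\textbf{Step 2 (monotonicity of the rent).} I will show $\beta>0$, so that $\lambda_S$ is strictly decreasing on $[0,k_S^{\mathrm{sat}})$. Substituting $N_S=k_S-m_S$ and solving the system shows that raising $k_S$ raises $N_S$ (share $1/(1+2b\gamma)$ by Corollary~\ref{coro:captive-constellation-captures-capacity}) and raises $m_S$. Both lower $R'(N_S)-c-l_S$ and hence lower $\lambda_S$. Then $k_S^{\mathrm{sat}}$ is defined as the root $\lambda_S=0$, and $\lambda_S\equiv 0$ beyond it by complementary slackness. So $\lambda_S$ is continuous and weakly decreasing everywhere, and $\Pi_S'(k_S)=\lambda_S(k_S)-C'(k_S)$ is strictly decreasing since $C''>0$.

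\textbf{Step 3 (existence, uniqueness, location).} At $k_S=0$ we have $\Pi_S'(0)=\lambda_S(0)-C'(0)>0$ by hypothesis. At $k_S^{\mathrm{sat}}$ we have $\Pi_S'=-C'(k_S^{\mathrm{sat}})<0$. By continuity and strict monotonicity there is a unique root $k_S^*\in(0,k_S^{\mathrm{sat}})$. Strict concavity of $\Pi_S$ makes this root the unique global maximizer, because $\Pi_S'<0$ for all $k_S>k_S^*$, including the slack region. At that root $\lambda_S(k_S^*)=C'(k_S^*)>0$, so the constraint binds with a positive shadow value.

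\textbf{Main obstacle.} The hard step is Step 1 together with regularity at small $k_S$. Near $k_S=0$ the interior allocation of Proposition~\ref{prop:captive-expansion} may fail, for example because $m_S=0$ or Assumption~\ref{assn:capacity-binds} is violated. In that case $\lambda_S(\cdot)$ is piecewise affine with kinks. My plan is to avoid needing differentiability and instead argue with superdifferentials. $\pi_S^*$ is concave in $k_S$, being a value function of a concave program in which capacity enters the constraint linearly, and its superdifferential is $\{\lambda_S\}$ almost everywhere. Across regime changes $\lambda_S$ is still weakly decreasing and continuous, since the multiplier is continuous in a parametric concave program with a linear constraint. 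The crossing argument of Step 3 then goes through unchanged.
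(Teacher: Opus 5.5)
Your Steps 1--3 are the paper's proof. It uses the envelope identity $\pi_S^{*\prime}=\lambda_S$ and the affine rent $\lambda_S(k_S)=\frac{\theta-c-l_S+\gamma(a-2k_S+bl_B-2bl_S)}{1+2b\gamma}$, whose slope is $-2\gamma/(1+2b\gamma)$. It then gets strict concavity of $\Pi_S$ from $C''>0$ and checks the signs at $0$ and $k_S^{\mathrm{sat}}$. Your worry about strategic terms is legitimate; the paper invokes the envelope theorem without addressing it. It also resolves concretely, not by ``folding into the definition.'' In the binding interior regime, $N_K^{VI}=\frac{2b(\theta-c)-a-bl_B}{3(1+2b\gamma)}$ does not depend on $k_S$, so the term $\big(-\gamma N_S+\tfrac{m_S}{2b}\big)\,dN_K/dk_S$ is zero. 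In Step 2, the fall in $R'(N_S)-c-l_S$ likewise comes from $N_S$ alone, since $m_S$ does not enter it.

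The gap is the fix in your last paragraph. $\pi_S^*$ is an equilibrium continuation payoff, not the value function of a single concave program, so its superdifferential need not be $\{\lambda_S\}$. Near $k_S=0$ the interior allocation does fail whenever $N_K^{VI}>0$, because $m_S=\frac{2b\gamma k_S}{1+2b\gamma}-N_K^{VI}<0$ for $k_S<\underline{k}$. In the corner regime ($m_S=0$, $N_S=k_S$), constellation $K$'s condition gives $N_K=\frac{2b(\theta-c)-a-bl_B-2b\gamma k_S}{2(1+2b\gamma)}$, which falls with $k_S$. Then $\pi_S^{*\prime}(k_S)=\lambda_S(k_S)+\frac{b\gamma^2k_S}{1+2b\gamma}>\lambda_S(k_S)$. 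So if the optimum lies below $\underline{k}$, the stage-1 condition is not $\lambda_S(k_S^*)=C'(k_S^*)$, and the crossing argument of Step 3 does not locate the maximizer. In this linear model $\pi_S^*$ remains concave (it has a concave kink at $\underline{k}$), so uniqueness is not at stake; the displayed identity is. The paper never enters this region. It uses the binding-interior closed form for all $k_S$, including $\lambda_S(0)$, which is therefore an affine extrapolation rather than an equilibrium multiplier. To make your argument correct, either adopt that convention explicitly or assume $k_S^*\in(\underline{k},k_S^{\mathrm{sat}})$; do not rely on the superdifferential patch.
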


\begin{proof}
Capacity enters the stage-2 program only through the constraint $N_S + m_S \leq k_S$, so by the envelope theorem $\pi_S^{*\prime}(k_S) = \lambda_S(k_S)$, the constraint's multiplier. The stage-2 equilibrium (proof of Proposition~\ref{prop:captive-expansion}) gives
\[
\lambda_S(k_S) = \frac{\theta - c - l_S + \gamma\,(a - 2k_S + b\,l_B - 2b\,l_S)}{1 + 2b\gamma},
\]
strictly decreasing in $k_S$. Hence $\Pi_S'(k_S) = \lambda_S(k_S) - C'(k_S)$, with $\Pi_S'(0) = \lambda_S(0) - C'(0) > 0$ and $\Pi_S''(k_S) = \lambda_S'(k_S) - C''(k_S) < 0$. So $\Pi_S$ is strictly concave, and its maximizer is unique and interior, solving $\lambda_S(k_S^{*}) = C'(k_S^{*})$. Since $C' > 0$, $\lambda_S(k_S^{*}) > 0$, so by complementary slackness the stage-2 constraint binds. Because $\lambda_S$ is decreasing with $\lambda_S(k_S^{\mathrm{sat}}) = 0 < C'(k_S^{\mathrm{sat}})$, the optimum satisfies $k_S^{*} < k_S^{\mathrm{sat}}$.
\end{proof}

\section*{Appendix C: Falcon 9 Advertised
Prices}\label{appendix-c-falcon-9-advertised-prices}
\addcontentsline{toc}{section}{Appendix C: Falcon 9 Advertised Prices}

\label{app:falcon-prices}

The advertised (``sticker'') Falcon 9 prices used in the introduction
and in Figure \ref{fig-stylized-facts} are the prices listed on SpaceX's
published rate card in nominal U.S. dollars.
Table~\ref{tbl-falcon-prices} shows the annual price snapshots with
sources. All prices are for the standard, single-launch payment plan to
a reference orbit; multi-launch contracts, rideshare, and government
missions are priced separately and are not shown. Actual prices paid by
customers for single-launch deliveries may vary by mission parameters,
e.g., whether the booster is recovered or not.

\begin{longtable}[]{@{}
  >{\raggedright\arraybackslash}p{(\linewidth - 4\tabcolsep) * \real{0.3333}}
  >{\raggedright\arraybackslash}p{(\linewidth - 4\tabcolsep) * \real{0.3333}}
  >{\raggedright\arraybackslash}p{(\linewidth - 4\tabcolsep) * \real{0.3333}}@{}}
\caption{Falcon 9 advertised single-launch prices and their archived
sources.}\label{tbl-falcon-prices}\tabularnewline
\toprule\noalign{}
\begin{minipage}[b]{\linewidth}\raggedright
Year
\end{minipage} & \begin{minipage}[b]{\linewidth}\raggedright
Advertised price
\end{minipage} & \begin{minipage}[b]{\linewidth}\raggedright
Source
\end{minipage} \\
\midrule\noalign{}
\endfirsthead
\toprule\noalign{}
\begin{minipage}[b]{\linewidth}\raggedright
Year
\end{minipage} & \begin{minipage}[b]{\linewidth}\raggedright
Advertised price
\end{minipage} & \begin{minipage}[b]{\linewidth}\raggedright
Source
\end{minipage} \\
\midrule\noalign{}
\endhead
\bottomrule\noalign{}
\endlastfoot
2012--2013 & \$54M & Internet Archive capture
\href{https://web.archive.org/web/20130715094112/http://www.spacex.com/falcon9}{2013-07-15} \\
2014--2016 & \$61.2M &
\href{https://spacenews.com/spacexs-reusable-falcon-9-what-are-the-real-cost-savings-for-customers/}{\emph{SpaceNews}}
(Apr.~2016) \\
2019--2021 & \$62M & Internet Archive captures
\href{https://web.archive.org/web/20200527123302/https://www.spacex.com/media/Capabilities\%26Services.pdf}{2020-05-27},
\href{https://web.archive.org/web/20201101010603/https://www.spacex.com/media/Capabilities\%26Services.pdf}{2020-11-01} \\
2022--2023 & \$67M & Internet Archive captures
\href{https://web.archive.org/web/20220322170331/https://www.spacex.com/media/Capabilities\%26Services.pdf}{2022-03-22},
\href{https://web.archive.org/web/20230331145041/https://www.spacex.com/media/Capabilities\%26Services.pdf}{2023-03-31} \\
2024 & \$69.75M & Internet Archive capture
\href{https://web.archive.org/web/20240607073822/https://www.spacex.com/media/Capabilities\%26Services.pdf}{2024-06-07} \\
2025 & \$69.85M & Internet Archive capture
\href{https://web.archive.org/web/20250511135440/https://www.spacex.com/media/Capabilities\%26Services.pdf}{2025-05-11} \\
2026 & \$74M & SpaceX \emph{Capabilities \& Services} rate card
(\href{https://www.spacex.com/media/Capabilities\%26Services.pdf}{current}) \\
\end{longtable}

\noindent The headline comparison in the introduction uses the 2012 and
2026 endpoints, the latter deflated to 2012 dollars with the NASA New
Start Inflation Index.

\end{document}